\newcommand\refDefinition[1]{\hyperref[#1]{Definition~\ref*{#1}}}
\numberwithin{equation}{section}
\algnewcommand\algorithmicforeach{\textbf{for each:}}
\algnewcommand\ForEach{\item[ \algorithmicforeach]}
\begin{document}
\title{Revenue Maximizing Envy-Free Pricing in Matching Markets with Budgets} 

\author{Riccardo Colini-Baldeschi \inst{1} \and Stefano Leonardi  \inst{2} \thanks{
This work is partly supported by the EU FET project MULTIPLEX no. 317532 and the Google Focused Award on ``Algorithms for Large-scale Data Analysis''.} \and Qiang Zhang \inst{2}}
\institute{Dipartimento di Economia e Finanza, LUISS, Rome,  \email{rcolini@luiss.it}
\and Sapienza University of Rome, \email{leonardi@dis.uniroma1.it}, \email{qzhang@gmail.com}}

\maketitle

\begin{abstract}

We study envy-free pricing mechanisms in matching markets with $m$ items and $n$ budget constrained buyers.  
Each buyer is interested in a subset of the items on sale, and she appraises at some single-value every item in her preference-set.  Moreover, each buyer has a budget that constraints the maximum affordable payment, while  
she aims to obtain as many items as possible of her preference-set.  
Our goal is to compute an envy-free pricing allocation that maximizes the revenue, i.e., the total payment charged to the buyers.  
This pricing problem is hard to approximate better than  $\Omega({\rm min} \{n,m\}^{1/2-\epsilon})$ for any $\epsilon>0$, unless $P=NP$~\cite{DBLP:conf/wine/Colini-BaldeschiLSZ14}.  The hardness result is due to the presence of the matching constraints given that the  simpler multi-unit case can be approximated up to a constant factor of $2$ \cite{DBLP:conf/sigecom/FeldmanFLS12}. 
The goal of this paper is to circumvent the hardness result by restricting ourselves to specific settings of valuations and budgets. Two particularly significant scenarios are: each buyer has a budget that is greater than her single-value valuation, and each buyer has a budget that is lower than her single-value valuation. Surprisingly, in both scenarios we are able to achieve a $1/4$-approximation to the optimal envy-free revenue. The algorithms utilize a novel version of the Ausebel \cite{Ausubel04} ascending price auction. These results may suggest that, although it is difficult to approximate the optimal revenue in general, ascending price auctions could achieve relatively good revenue in most of the practical settings.

\end{abstract}

\section{Introduction}

In this paper we study revenue maximization with envy-free pricing in matching markets. 
Imagine a seller that would like to sell $m$ different items to $n$ buyers.
Every buyer $i$ is interested in a subset of items $S_i$ (the \textit{preference-set}) and has a budget $b_i$ that represents her maximum affordable payment.
Moreover, every buyer $i$ appraises each item in her preference-set at value $v_{i}$, and any other item has zero-value for her.
The buyers are willing to get the largest number of items in their preference-sets. But every buyer $i$ doesn't want to pay more than her value $v_{i}$ for an item and more than her budget $b_{i}$ for the whole set of obtained items. 

The seller has full knowledge on the buyer types and aims to compute an outcome that maximizes her revenue.
The outcome is composed by a payment vector (a payment $p_{i}$ for each buyer $i$) and an allocation vector (a, potentially empty, set of items $X_{i}$ for each buyer $i$).

Every buyer $i$, given a payment $p_i$ and an allocation $X_i$, has a utility equal to $v_i \cdot |X_i \cap S_i| - p_i$ if the payment is less than or equal to the budget, i.e. $p_{i} \leq b_{i}$. However, the utility becomes $-\infty$ if the payment exceeds the budget, i.e. $p_{i} > b_{i}$.

Our goal is to design a pricing algorithm that is able to provide a good revenue to the seller and observe some fairness criterion for the buyers.
In order to model fairness, we consider two notions: (i) the \textit{individual rationality}, and (ii) the \textit{envy-freeness}.
Individual rationality means that at the end of the auction no buyer will experience a negative utility.
Envy-freeness means that, given a pricing scheme (an assignment of prices to items or sets of items), every buyer obtains the most desired set of items.

Different notions of envy-freeness have been studied in literature, and a detailed discussion about them is deferred to Section \ref{sec:relatedwork}.
In classical economics, the standard envy-free definition embodies {\em bundle pricing} ~\cite{Foley,Varian}. In the bundle pricing scheme each (different) set of items has a (potentially) different price. Thus in a \textit{bundle-price envy-free} allocation no buyer has an incentive to barter her bundle with the bundle of someone else.

More recently, envy-freeness has been often studied in the more restrictive setting of {\em item pricing} ~\cite{BalcanBM08,Briest:2006:SUS:1109557.1109678,DBLP:conf/approx/ChalermsookCKK12,CS08,GHKKKM05,DBLP:conf/soda/DemaineHFS06}.
In an item pricing scheme each (different) item has a (potentially) different price. Thus, the price for a bundle is obtained by summing the single items' prices contained in it.  Consequently, in \textit{item-price envy-free} allocations every buyer receives the bundle that maximizes her own utility. 

The difference is that the former definition of envy-freeness allows a buyer to envy only bundles
that are assigned to someone else.
Instead, the latter definition allows a buyer to be envious if the obtained bundle is not the utility-maximizer bundle (even if the utility-maximizer bundle is not allocated at all).
In this paper we will use both definitions, and we will explicitly state which envy-free condition is satisfied for each provided algorithm.
All our algorithms use as benchmark the optimal bundle-price envy-free revenue.\footnote{A bundle pricing scheme is able to extract more revenue than an item pricing scheme. Thus, competing against the optimal bundle-price envy-free revenue is the hardest task in this context. See Section \ref{sec:relatedwork}.}

The envy-free revenue-maximization problem with budget constraints has been initially studied by Feldman et al.~\cite{DBLP:conf/sigecom/FeldmanFLS12} in the multi-unit setting that considers one single kind of items (no matching constraints).  The authors discussed the limitations of the different pricing schemes, specifically they showed through separation examples that an item-pricing scheme cannot achieve more than $O(1/m)$ fraction of the optimal envy-free bundle-price revenue on some specific example. Thus, relying on a bundle pricing scheme, they provided a $2$-approximation algorithm to the optimal envy-free bundle-price revenue for multi-unit setting with budgeted buyers.

More recently, Branzei et al.~\cite{DBLP:journals/corr/BranzeiFMZ16} studied the same multi-unit setting in the context of item-pricing scheme, and they gave an FPTAS for the optimal envy-free item-price revenue (and an exact algorithm for special cases) using an item-pricing scheme.   

The envy-free revenue-maximization problem in matching markets with budget constraints has been considered for the first time in
Colini{-}Baldeschi et. al.~\cite{DBLP:conf/wine/Colini-BaldeschiLSZ14}. They proved that 
for $n$ buyers and $m$ items, the optimal  revenue cannot be approximated by a polynomial time algorithm within $\Omega({\rm min} \{n,m\}^{1/2-\epsilon})$ for any $\epsilon>0$, unless $P=NP$.  In this paper we present a novel approach that allows us to circumvent the $\Omega({\rm min} \{n,m\}^{1/2-\epsilon})$ impossibility result for practically relevant special cases.

\subsection{Our results}

We first consider the case where every buyer has a budget that is greater than or equal to her valuation. For example, in online advertising, it is typical that the advertisers' budgets are greater than their CPM valuations for a block of  impressions, and the advertisers are interested in buying multiple blocks of impressions from different publishers. Second, complementary to the first case, we study the case where every buyer has a budget that is less than her valuation.  

%
In the two cases we are able to provide:

\begin{itemize}

\item an individually-rational and item-price envy-free algorithm that achieves a 4-approximation to the optimal bundle-price envy-free revenue, when $b_{i} \geq v_{i}$ for every buyer $i$, and

\item an individually-rational and bundle-price envy-free algorithm that achieves a 4-approximation to the optimal bundle-price envy-free revenue, when $b_{i} < v_{i}$ for every buyer $i$.

\end{itemize}

The key ingredient needed to achieve these results is the design of an ascending price auction (inspired by the Ausubel auction \cite{Ausubel04}) for the matching market setting.
Our implementation adopts the standard ascending price auction's lifecycle: (i) the price is raised until some condition is met (usually referred as \textit{selling condition}),
(ii) when the selling condition is met an appropriate \textit{selling procedure} is executed, and the auction goes back to (i).
The main novelty introduced in our implementation is about the selling procedure.
The idea is to use proper arguments from matching theory ($B$-matchings and their properties) to compute an envy-free allocation of items to buyers.

Besides envy-freeness we are interested to secure the fairest pricing scheme to the buyers. The fixed-price scheme (one single price for all the items) is the standard
way to achieve fairness in setting with identical items. But, as expected, in setting with different items (like matching markets) a fixed-price scheme cannot obtain more than  a logarithmic fraction of the optimal revenue regardless of computational complexity considerations (see Appendix \ref{sec:lowerbound}).

Thus different prices for different items is a strict requirement to achieve a constant approximation. Nonetheless our aim is still to diversify the prices of the items as little as possible to avoid discrimination among buyers.

In the first case ($b_{i} \geq v_{i}$), if a buyer obtains some items, then every item she obtained is charged at the same price, and all the buyers that desire one of that items (and are able to pay for them) obtain other items at the same price. Intuitively, the idea is to partition the initial market in many submarkets and apply a fixed-price scheme in each of them.

Unfortunately, the same approach cannot be used in the second case ($b_{i} < v_{i}$).  Here it is necessary to adopt a bundle-pricing scheme. Namely, it is possible that different items are assigned at different prices to the same buyer, but for each price every buyer obtains the best bundle among the bundles assigned at the same price.
A bundle-pricing scheme is a strict requirement to obtain a constant approximation, indeed in Feldman et al. \cite{DBLP:conf/sigecom/FeldmanFLS12} a separation example showed that with an item-pricing scheme it is not possible to achieve an approximation factor better than $m$ (number of items) to the optimal envy-free revenue.\footnote{We remark that the separation example can be easily translated in an instance of the case in which every buyer $i$ has $b_{i}<v_{i}$. Instead, in the first case, when every buyer $i$ has $b_{i} \geq v_{i}$, the separation example fails. Indeed, in that case we are able to provide an item pricing scheme that obtains a constant approximation to the optimal bundle-price envy-free revenue.}

Since the ascending price auction approach can be adapted in either special cases.
We conclude that the general inapproximability result of this problem provided in Colini{-}Baldeschi et. al.~\cite{DBLP:conf/wine/Colini-BaldeschiLSZ14} is the result of the interaction between the above mentioned buyer classes.

\subsection{Related work}
\label{sec:relatedwork}

\textit{Envy-free pricing.} The notion of envy-freeness was initially introduced by Foley~\cite{Foley} and Varian~\cite{Varian}. The key property of an envy-free allocation is that no buyer has incentive
to exchange her bundle-payment pair with the bundle-payment pair of another buyer. That is to say, given a set of bundles and the corresponding prices, every buyer obtains
the bundle that maximizes her utility. This definition became the standard definition in the economics literature.
More recently,  the notion of envy-freeness has been deeply studied with a different perspective that involves item-pricing instead of bundle-pricing.
Indeed the key property of an envy-free allocation has been reshaped as follows: given a set of items and the corresponding (per-item) prices, every buyer obtains the bundle that maximizes her utility. Notice that the price of a bundle is automatically obtained from the sum of the items' prices contained in it.
It is easy to see that this definition is more difficult to satisfy than the classical economics definition.
The item-pricing version of envy-freeness has been considered in 
\cite{BalcanBM08,Briest:2006:SUS:1109557.1109678,DBLP:conf/approx/ChalermsookCKK12,CS08,GHKKKM05,DBLP:conf/soda/DemaineHFS06}.

The envy-freeness was studied in the context of different pricing schemes by Feldman et al.~\cite{DBLP:conf/sigecom/FeldmanFLS12}.
Specifically, they focused they attention on: (i) the bundle-pricing scheme, where it is possible to specify different prices for different bundles, (ii) the item-pricing scheme,
where different prices can be assigned to different items, and (iii) proportional-pricing scheme, that embody an item-pricing scheme plus the possibility to specify a maximum or a minimum size on the bundles that can be demanded by the buyers.
Moreover, the authors were able to rank these pricing schemes with respect to two distinct criteria: the customer experience (how much the customers judge fair a pricing scheme) or the obtainable revenue (how much a pricing scheme is able to extract from the customers).
In terms of customer experience the most desired pricing scheme is the item-pricing scheme, then the proportional-pricing scheme, and finally the bundle-pricing scheme.
This is because the customers prefer the pricing schemes that allow less discrimination and are more uniform (uniformity is perceived as fairness).
As expected, on the revenue perspective the ranking is reversed: the bundle-pricing scheme is able to produce the highest revenue, then proportional-pricing scheme is able to extract something less, and then the item-pricing scheme is the scheme that is able to extract less.
Feldman et al.~\cite{DBLP:conf/sigecom/FeldmanFLS12} have shown several separation examples that clearly state the limits of the different pricing schemes.

In our paper we aim to design algorithms that embody the fairest pricing scheme and compete against the powerful optimum. That is, we aim to design item-pricing algorithms that obtain a constant approximation to the optimal bundle-pricing envy-free revenue. When the item-pricing scheme fails to obtain a constant approximation, then we embody a proportional-pricing (or bundle-pricing) scheme.

\textit{Revenue-maximization.} Moreover, Feldman et al.~\cite{DBLP:conf/sigecom/FeldmanFLS12} proved that the problem of computing an optimal bundle-pricing envy-free revenue is NP-hard.
Thus, they provided
a $2$-approximation algorithm for multi-unit setting with budget constraints, it is also proved that this result is tight.
Colini-Baldeschi et al.~\cite{DBLP:conf/wine/Colini-BaldeschiLSZ14} investigated the problem in the context of multi-unit fixed-price auctions with budget constraints and matching markets with budget constraints.
Particularly relevant for our paper is the hardness result presented there. Indeed, they proved that the revenue maximizing envy-free problem in matching markets is
$\Omega({\rm min} \{n,m\}^{1/2-\epsilon})$ inapproximable for any $\epsilon>0$, unless $P=NP$.
This is why we are forced to study (particularly relevant) special cases.
The envy-free revenue maximization problem in multi-unit setting with budgets is also studied in \cite{DBLP:journals/corr/BranzeiFMZ16}. They provided algorithms that approximate optimal social welfare and optimal revenue with an item-pricing scheme when buyers are price takers, and an impossibility result for price making buyers.
The main difference between \cite{DBLP:conf/wine/Colini-BaldeschiLSZ14} and \cite{DBLP:journals/corr/BranzeiFMZ16} is that the former considered  bundle-pricing envy-freeness (pairwise envy-freeness), while the latter focused on the definition of item-pricing envy-freeness.
Recently, the item-pricing envy-free problem for general valuations in multi-unit markets without budgets is considered in \cite{DBLP:conf/ijcai/MonacoSZ15},
and a dynamic programming algorithm is provided in such setting.
Moreover, Chen and Deng \cite{chen2014envy} provide hardness results for unit-demand buyers in the multi-unit markets without budgets setting.

\textit{Mechanisms with budgets.} Dobzinski et al.~\cite{DBLP:conf/focs/DobzinskiLN08} provided a Pareto-optimal and incentive compatible mechanisms in multi-unit settings (i.e., multiple copies of the same items).  Similar results are obtained  in~\cite{FLSS11} where buyers have different preference sets. No incentive-compatible mechanism is fully efficient when buyers have budgets: only Pareto-optimality can be achieved. Moreover, it is crucial to assume that budgets and preferences are public knowledge for the design of incentive compatible, Pareto-optimal mechanisms.
If we insist on incentive compatible and Pareto-optimal mechanisms, no envy-free efficient allocation is possible.  Feldman et al.~\cite{DBLP:conf/sigecom/FeldmanFLS12} observed that existing mechanisms also sell identical items at different prices, e.g., the VCG mechanism for buyers with unlimited budget~\cite{clarke1971multipart,groves1973incentives,vickrey-61}, or the ascending price auction for buyers with budgets~\cite{DBLP:conf/focs/DobzinskiLN08,Nisan09}.

\textit{Ascending price auctions.} Ascending price auctions were used in FCC spectrum auctions and were initially studied in \cite{Ausubel04,ausubel-milgrom,Milgrom98putingauction}. Later, ascending price auctions were widely applied in the context of sponsored search auctions. Dobzinski et al.\cite{DBLP:conf/focs/DobzinskiLN08} designed a Pareto-optimal, incentive compatible ascending price auction for a multi-unit setting when buyers have budget constraints.
Numerous subsequent papers extended this setting, see \cite{DBLP:conf/icalp/BaldeschiHLS12,FLSS11,DBLP:conf/stoc/GoelML12}.

\section{Preliminaries}\label{sec:prelim}
An instance of the revenue-maximizing envy-free pricing problem in matching markets can be formally depicted by the tuple $\mathcal{A} = \langle I, J, \mathbf{S}, \mathbf{v}, \mathbf{b} \rangle$. There is a set of $|I|=n$ buyers and a set of $|J|=m$ different items in the market. Every buyer $i \in I$ is interested in a set of items $S_i \subseteq J$, which we refer as the preference-set of buyer $i$. Buyers equally value the items in their preference-sets. Specifically, every buyer $i$ has a valuation $v_i \in \mathbb{R}_{> 0}$ for each item $j \in S_i$ and has a valuation of zero for any item $j \notin S_i$. Every buyer $i$ has a budget $b_i \in \mathbb{R}_{\geq 0}$ that is the maximum payment she can afford. 

An algorithm computes an outcome  $\langle \mathbf{X}, \mathbf{p} \rangle$ for every possible instance $\mathcal{A}$, where $\mathbf{X} = \langle X_1, \ldots, X_n \rangle$ is the allocation vector, and $\mathbf{p}= \langle p_1,\ldots,p_n \rangle$ is the payment vector. That is, for each buyer $i$, $X_i \subseteq J$ is the set of items allocated to buyer $i$, and $p_i  \in \mathbb{R}_{\geq 0}$ is the payment charged to buyer $i$.
Moreover, we use $\bar{p}_{i}$ to denote the per-item-price paid by a buyer $i$, i.e., $\bar{p}_{i} = \frac{p_{i}}{|X_{i}|}$.
Given an allocation $X_i$ and a payment $p_i$, the utility of a buyer $i$ is defined as 

\[
u_i(X_i, p_i) = 
 \begin{cases} 
   v_i \cdot |X_i \cap S_i| - p_i & \text{if } b_{i} \geq p_{i}\\
   -\infty  & \text{if } b_{i} < p_{i}
  \end{cases}
\]

In addition, a feasible outcome $\langle \mathbf{X}, \mathbf{p} \rangle$ must satisfy the following constraints:

\begin{itemize}
	\item \textit{feasibility} (or \textit{supply constraint}): for any pair of buyers $i, i' \in I$,  $X_i \cap X_{i'} = \emptyset$;
	\item \textit{individual rationality}: for any buyer $i \in I$, $u_i(X_i, p_i) \geq 0$.
\end{itemize}

Furthermore, we incline toward our algorithm to produce envy-free outcomes. An outcome $\langle \mathbf{X}, \mathbf{p} \rangle$ is: 

\begin{itemize}
	\item \textit{envy-free}: if given an item-pricing vector $\mathbf{\rho} = \{\rho_{1}, \ldots, \rho_{m}\}$ such that the price for the item $j$ is $\rho_{j}$. Then
	$p_{i} = \sum_{{j \in X_{i}}} \rho_{j}$, and there is no bundle $X' \subseteq J$ such that $u_{i}(X',\sum_{{j \in X'}} \rho_{j}) > u_{i}(X_{i},p_{i})$. And,
	\item \textit{pairwise envy-free}: if given a set of proposed bundles $\mathbf{X}$ such that: every bundle $X_{i} \in \mathbf{X}$ has a corresponding price $p_{i} \in \mathbf{p}$, and every bundle $X' \not\in \mathbf{X}$ has price equal to $\infty$. Then for every buyer $i$ and every bundle $X_{j} \in \mathbf{X}$, buyer $i$ prefers her own bundle, i.e., $u_{i}(X_{i},p_{i}) \geq u_{i}(X_{j},p_{j})$.
\end{itemize}


Given outcome $\langle\mathbf{X},\mathbf{p}\rangle$, the revenue (i.e., the revenue of the algorithm on instance $\mathcal{A}$) is the sum of the payments of all buyers, i.e., $\mathcal{R}(\mathbf{X},\mathbf{p}) = \sum_{i \in I} p_i$. Our goal is to design an envy-free algorithm that approximates the optimal envy-free revenue for every possible instance $\mathcal{A}$. 

\subsection{Ascending price auction}
Our technique to design revenue-maximizing envy-free algorithms relies on the implementation of an ascending price auction. 
The standard implementation of an ascending price auction is as follows: (i) the price (initialized at zero) is raised until some condition is met (usually referred as \textit{selling condition}),
(ii) when the selling condition is met an appropriate \textit{selling procedure} is executed, and the auction goes back to (i) (if there is at least one buyer with positive demand).
The novelty of the ascending price auction described in this paper is about the selling procedure. We will give a detailed description of the selling procedures in Section \ref{sec:bgeqv} and Section \ref{sec:blessv}, but in either cases the selling procedure relies on the graph representation of the problems and their properties.

To facilitate our future presentation and analysis, here we introduce some necessary notations. For a price $p$, the \textit{demand} of buyer $i$ is defined as follows:


\begin{align}\label{eq:demand}
	D_i(p) &= \begin{cases} 
	\min\{\lfloor \frac{b_i}{p} \rfloor, |S_i|\}, & \mbox{if } p \leq v_i\\
	0, & \mbox{if } p > v_{i}
	\end{cases}
\end{align}

Intuitively, $D_i(p)$ is the number of the items that maximizes the utility of buyer $i$ if all items in $S_i$ are priced at $p$.

Given price $p$, we define three sets  of buyers $A^{p}$, $Q^{p}$, and $I^{p}$.
$A^{p}$ contains the buyers whose valuations are strictly greater than $p$ and having positive demands, i.e.,
$A^{p} = \{ i \in I | v_i > p \land D_i(p) > 0\}$.
$Q^{p}$ contains the buyers whose valuations are equal to $p$ and having positive demands, i.e., $Q^{p} = \{ i \in I | v_i = p \land D_i(p) > 0 \}$.
Finally, let $I^{p}$ be the union of $A^{p}$ and $Q^{p}$, i.e., $I^{p} =A^{p} \cup Q^{p}$.


\subsection{Graph representation}
Matching markets have a very intuitive bipartite graph representation. Consider a set of buyers $I' \subseteq I$ and a set of items $J' \subseteq J$ as two disjoint sets of nodes in a bipartite graph. Given a price $p$, there exists an edge between $i\in I'$ and $j \in J'$ if buyer $i$ demands item $j$ at price $p$, i.e., $(j \in S_i) \land (b_i \geq p) \land (p \leq v_i)$. More specifically, taking the notations established above, given a particular price $p$ and a subset of items  $J' \subseteq J$, we define a bipartite graph $G^p = (I^p \cup  J', E^p)$, where $E^p = \{(i,j)| i\in I^p, j \in J' \cap S_i\}$. Similarly, we define $\bar{G}^p = (A^p \cup J', \bar{E}^p)$ as the bipartite graph that only includes  buyers in $A^p$.  In this paper we refer to $G^p$ and $\bar{G}^p$ as \textit{preference-graphs}.  

Additionally, allocations in matching markets can be seen as matchings in preference-graphs, because an allocation essentially ``maps" buyers to a subset of items.  To formalize this idea, we introduce the concept of $B$-matching. 

\begin{definition}
Given a bipartite graph $G^p = (I^p \cup J', E^p)$, a $B$-matching $\mathcal{M}(G^p)$ is a 
sub-graph of $G^p$ such that every buyer is not matched to a number of items greater than her demand, i.e., $\forall i \in I^P$, $|\{ j \in J' : (i,j) \in \mathcal{M}(G^p)\}| \leq D_i(p)$, and every item is not matched to more than one buyer, i.e., 
$\forall j \in J^p$, $|\{i \in I^p : (i,j) \in \mathcal{M}(G^p)\}| \leq 1$.
\end{definition}
Similarly, we use $\mathcal{M}(\bar{G}^p) $ to denote the $B$-matching on graph $\bar{G}^p$. 
From now on, we will simply write matchings instead of $B$-matchings, and the notation $\mathcal{M}(G^p)$ will refer to a \textit{maximum matching} on $G^{p}$.
Given a matching, an allocation can be easily constructed. For example, if edge $(i,j)$ is in the matching, then item $j$ is allocated to buyer $i$. By the matching's definition, the allocations constructed from the matchings would satisfy the supply constraint and the budget constraint.

Finally, we introduce the concept of augmenting path. Given a preference-graph, an augmenting path starts with a buyer and ends with an unallocated item in the preference-graphs. We mainly use augmenting paths to produce envy-free outcomes. Intuitively, to achieve the envy-freeness, given a matching at price $p$, if a buyer obtains an item then every buyer in $A^p$, who is connected to that item by an augmenting path, will also obtain the items in their preference sets at price $p$.

\begin{definition}
Given a $B$-matching $\mathcal{M}(G^p)$ (resp. $\mathcal{M}(\bar{G}^p)$), an augmenting path $\pi$ between buyer $i$ and item $j$ is a path $\pi = \{i=y_1, z_1, y_2, z_2,\ldots, y_h, z_h=j\}$ such that the following conditions hold:
\begin{enumerate}
	\item $\forall k \in [1, \ldots, h]$, $y_k$ is a buyer. That is, $y_k \in I^p$(resp. $y_k \in A^p$).
	\item $\forall k \in [1, \ldots, h]$, $z_k$ is an item. That is, $z_k \in J'$. 
	\item $\forall k \in [1, \ldots, h]$, $z_k \in S_{y_k}$ and $(y_k, z_k) \notin \mathcal{M}(G^p)$ (resp. $(y_k, z_k) \notin \mathcal{M}(\bar{G}^p)$).
	\item $\forall k \in [1, \ldots, h-1]$, $(z_k, y_{k+1}) \in \mathcal{M}(G^p)$ (resp. $ (z_k, y_{k+1}) \in \mathcal{M}(\bar{G}^p)$).
\end{enumerate}
\end{definition}

In an augmenting path defined above, if item $z_h$ is allocated to a buyer $i \notin \{y_1,\ldots, y_h\}$ at price $p$, we will also allocate some item to  every buyer in $\{y_1,\ldots, y_h\}$. We use this technique to achieve the envy-freeness among buyers.

\section{An ascending price algorithm for $b_i \geq v_i$}\label{sec:bgeqv}
In this section we present an algorithm that obtains a $4$-approximation to the optimal bundle-price envy-free revenue when all  buyers have budgets that are equal to or greater than their valuations, i.e., $\forall i \in I, b_i \geq v_i$.
We remark that this is an item pricing algorithm, and it achieves a constant approximation with respect to the optimal bundle-price envy-free revenue.

The general idea is to implement an ascending price auction with selling conditions and selling procedures accurately designed to take advantage of this scenario.
The implementation of the ascending price auction is described in Algorithm \ref{alg:main}.
At the beginning the price is set to zero. Then the algorithm increases the price until a proper
selling condition is satisfied (line \ref{aucline:incprice}).
When the selling condition is matched, the algorithm executes a selling procedure and assigns items and payments to the involved buyers.
Notice that in Algorithm \ref{alg:main} there are two different selling procedures: \textsc{Compute-Allocation-I} described by Algorithm \ref{alg:VLalgorithm}, and \textsc{Compute-Allocation-II} described by Algorithm \ref{alg:NVLalgorithm}.
The selling condition and the two selling procedures will be described in the next subsections.

\subsection{Selling condition and critical prices}

Algorithm~\ref{alg:main} uses a selling condition to catch the correct price at which a selling procedure can be executed.
Intuitively, the notion of correct price relies  on the following abstraction: nothing can be sold if the buyers' cumulative demand is too high (the sum of the buyers' demands is greater than the available items).
But quantifying if buyers' cumulative demand is too high is not a trivial task for two main reasons: (i) in a matching markets setting the demand can be too high
on a particular set of items but too low on a different set, and (ii) the demand functions are not continuous, thus we can have a cumulative demand that is too high at $p$ and too low at $p+\epsilon$.

So, we want to detect each price $p$ that is borderline between a too high and a too low cumulative demand. These prices are called \textit{critical prices}.
To detect critical prices, we have to compute two maximum matchings for each price $p$: a maximum matching at price $p$ and a maximum matching at price $p+\epsilon$.
If the size of the maximum matching at $p$ is greater than the size of the maximum matching at $p+\epsilon$, then we know that on some set of items the cumulative demand
will be too low at any price higher than $p$. So, we have to check if something should be sold at $p$.


To be more formal, we define the set of critical prices as follows:

\begin{definition}
Given a price $p$, let $G^p = (I^p \cup J', E^p)$ where $J' \subseteq J$ is the set of unsold items. The price $p$ is \textit{critical} if $|\mathcal{M}(G^p)| > |\mathcal{M}(G^{p+\epsilon})|$, where $\mathcal{M}(G^p)$ and $\mathcal{M}(G^{p+\epsilon})$ are the maximum matchings in $G^p$ and $G^{p+\epsilon}$, respectively.
\end{definition}

Notice that the selling condition at line \ref{aucline:incprice} of Algorithm \ref{alg:main} is satisfied for each critical price.

\subsection{Detailed Description and Selling Procedures}

For each critical price $p$, Algorithm \ref{alg:main} performs the following actions:

\begin{itemize}
	\item The algorithm checks if price $p$ is critical because the demand of some buyer in $Q^p$ goes to zero when the price is increased above $p$ (line \ref{aucline:ifVL}). 
	In this case, Procedure \textsc{Compute-Allocation-I} computes an envy-free partial assignment, where every item is sold at $p$.
	
	\item The other reason of price $p$ being a critical price is that some buyer in $A^{p}$ cannot afford the same amount of items  at a slightly higher price (for budget limitations).
	In this case, Procedure \textsc{Compute-Allocation-II}  computes an envy-free
	partial  assignment, where every item is sold at $p+\epsilon$.
	
\end{itemize}

In either procedures the key element is the computation of an envy-free partial assignment.
In order to describe how an envy-free partial assignment is computed, we introduce some further notation.
Given a graph $G^{p}=(I^{p} \cup J', E^{p})$, a maximum matching $\mathcal{M}(G^{p})$ on $G^{p}$, and a subset of items $J' \subseteq J$, we denote with
$\bar{J} \subseteq J'$ the set of items that are not matched in $\mathcal{M}(G^{p})$, and with
 $N(\bar{J}) \subseteq I^{p}$ the set of buyers
that are connected with an augmenting path in $\mathcal{M}(G^{p})$ to some item in $\bar{J}$.
Now, an envy-free partial assignment is computed as follows:

\begin{enumerate}

\item a maximum matching $\mathcal{M}(G^{p})$ on the graph $G^{p}=(I^{p} \cup J, E^{p})$ is computed,

\item let $\bar{J} \subseteq J'$ be the subset of items that are not matched in $\mathcal{M}(G^{p})$,


\item the envy-free partial assignment $\textsc{m}$ is the subgraph of $\mathcal{M}(G^{p})$ that involves only buyers in $N(\bar{J})$.

\end{enumerate}

Restrict the allocation to the buyers in $N(\bar{J})$ is the key to obtain envy-freeness and a good revenue in a partial assignment.
Indeed, we claim that if a buyer $i$ is connected with an augmenting path to an item $j$, but that item $j$ remains unmatched in a maximum matching, then the buyer $i$ obtains as many items as she demands.
Intuitively, this is enough to extract a good revenue (allocating full demand we extract almost the whole budget), and to achieve an envy-free allocation (if a buyer receives her full demand, then she is happy).

The next paragraphs will discuss the details of the two procedures.
Procedure \textsc{Compute-Allocation-I} and Procedure \textsc{Compute-Allocation-II} implement the computation of envy-free partial assignment with different details,
but the two procedures are similar in spirit.

\begin{algorithm}
	\begin{algorithmic}[1]
		\Require $< I, J, \mathbf{S}, \mathbf{v}, \mathbf{b} >$
		\Ensure $\langle\mathbf{X},\mathbf{p} \rangle$
		\State{$p \gets 0$; /*$p$ is the uniform price for all items and it is dynamically increasing*/ }
		\State{$G^p = (I^p \cup J, E^p)$;  /*$G^p$ is the preference-graph for unsold items at price $p$ */} 
		\While{$|\mathcal{M}(G^p)| \neq \emptyset$} \label{aucline:activebuyers}
			
			\State{Increase $p$ until $|\mathcal{M}(G^p)| > |\mathcal{M}(G^{p+\epsilon})|$;} \label{aucline:incprice}
			
		\If{$|\mathcal{M}(G^p)    | > |\mathcal{M}(\bar{G}^p)|$} \label{aucline:ifVL}
		\State{$\textsc{M} \gets $\textsc{Compute-Allocation-I}$(I, J, \mathbf{S}, \mathbf{v}, \mathbf{b}, p)$;} \label{aucline:procedureVL}
		\For{each edge $(i,j) \in \textsc{M}$}
		\State{$J = J \setminus \{j\}$, $X_i = X_i \cup \{j\}$, $p_i = p_i + p$, $b_i = b_i - p$;}
		\EndFor
		\Else
		\State{$\textsc{M} \gets$\textsc{Compute-Allocation-II}$(I, J, \mathbf{S},\mathbf{v}, \mathbf{b},  p)$}; \label{aucline:procedureNVL}
		\For{each edge $(i,j) \in \textsc{M}$}
		\State{$J = J \setminus \{j\}$, $X_i = X_i \cup \{j\}$, $p_i = p_i +( p + \epsilon)$, $b_i = b_i - (p + \epsilon)$;}
		\EndFor
		\State{Remove all the items that are not demanded anymore.}
		\EndIf
		\EndWhile
	\end{algorithmic} \caption{An ascending price algorithm for matching markets  when $b_i \geq v_i$} \label{alg:main}
\end{algorithm}

\subsubsection{\textsc{Compute-Allocation-I}}
Algorithm \ref{alg:VLalgorithm} is executed when the price becomes equal to the valuation of some buyer, and it is not possible to sell the same amount of items at a slightly higher price. Notice that these buyers belong to the set $Q^{p}$.
Moreover, recall that, by definition of envy-freeness, every buyer in $Q^{p}$ cannot envy any other buyer that obtains items at price $p$.
Let us denote with $J^{p}_{Q}$ the set of available items in the preference sets of buyers in $Q^p$.
Thus, the general idea is to compute an envy-free partial assignment such that most of the items in $J^{p}_{Q}$ are assigned to the buyers in $A^p$.
In succession, the items that remain unmatched will be assigned to the buyers in $Q^{p}$.

To achieve this, the algorithm computes a maximum matching $\mathcal{M}(\bar{G}^p)$ with the minimal number of items in $J^p_Q$ matched (line \ref{aucline:computeVLmatching} in Algorithm \ref{alg:VLalgorithm}).
This can be done using the technique of weighted matching. 
In order to allocate the remaining items to buyers in $Q^p$ and preserve the envy-freeness of the outcome, the algorithm first allocates items to buyers in $A^p$ who have augmenting paths to any remaining item in  $\mathcal{M}(\bar{G}^p)$. 
Essentially, if an item is allocated to a buyer in $Q^p$ at price $p$, every buyer $i$ in $A^p$ who is connected to the item by an augmenting path obtains $\lfloor\frac{b_i}{p} \rfloor$ items and pays $p \cdot \lfloor\frac{b_i}{p} \rfloor$.  Thus, buyer $i$ will not envy the buyers in $Q^p$.

Every envy-free partial assignment $\textsc{M} \subseteq \mathcal{M}(\bar{G}^p)$ computed by procedure \textsc{Compute-Allocation-I} satisfies two important properties: (i) every buyer $i \in \textsc{M}$ receives an amount of items exactly equal to her demand, i.e., $|\{(i,j) \in \textsc{M}\}| = D_{i}(p)$ $\forall i \in \textsc{M}$, and (ii) every item that is not assigned at the end of procedure \textsc{Compute-Allocation-I} is requested by at least one buyer after procedure \textsc{Compute-Allocation-I}.

These properties are formally proved by the following lemmata.

\begin{lemma}\label{lem:vlallocation}
	Let $\mathcal{M}(\bar{G}^p)$ be a maximum matching on graph $\bar{G}^p$ 
	and $J_{Q}^{p} \neq \emptyset$, then each buyer $i \in N(\bar{J})$ is matched to $D_i(p)$ items in $\mathcal{M}(\bar{G}^p)$, where $\bar{J}$ is the set of items not matched in $\mathcal{M}(\bar{G}^p)$.
	
	
\end{lemma}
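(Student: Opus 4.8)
The plan is to prove the statement by contradiction, exploiting the alternating structure of augmenting paths together with the maximality of $\mathcal{M}(\bar{G}^p)$. I would fix a buyer $i \in N(\bar{J})$ and suppose, for contradiction, that $i$ is matched to strictly fewer than $D_i(p)$ items in $\mathcal{M}(\bar{G}^p)$. By the definition of $N(\bar{J})$ there is an augmenting path $\pi = \{i = y_1, z_1, y_2, z_2, \ldots, y_h, z_h = j\}$ connecting $i$ to an item $j \in \bar{J}$ that is unmatched in $\mathcal{M}(\bar{G}^p)$.

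First I would toggle the edges along $\pi$ to obtain a new subgraph $\mathcal{M}'$: add every edge $(y_k, z_k)$ for $k \in [1, h]$ and delete every edge $(z_k, y_{k+1})$ for $k \in [1, h-1]$. These are exactly the edges certified to lie outside, resp.\ inside, $\mathcal{M}(\bar{G}^p)$ by conditions (3) and (4) of the augmenting-path definition. Since a path has distinct vertices, the toggling is well defined and adds $h$ edges while removing $h-1$, so $|\mathcal{M}'| = |\mathcal{M}(\bar{G}^p)| + 1$.

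Next I would check that $\mathcal{M}'$ is still a valid $B$-matching. On the item side, each intermediate item $z_k$ with $k < h$ merely switches its unique partner from $y_{k+1}$ to $y_k$ and keeps degree one, while the terminal item $j = z_h$ passes from degree zero to degree one; hence no item is over-matched. On the buyer side, each intermediate buyer $y_k$ with $2 \le k \le h$ simultaneously gains the edge $(y_k, z_k)$ and loses the edge $(z_{k-1}, y_k)$, so its degree, and therefore its feasibility against $D_{y_k}(p)$, is unchanged; the only buyer whose degree strictly increases is the starting buyer $y_1 = i$, and this is exactly the one for which the contradiction hypothesis guarantees spare capacity below $D_i(p)$. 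Thus $\mathcal{M}'$ is a feasible $B$-matching strictly larger than $\mathcal{M}(\bar{G}^p)$, contradicting maximality. Consequently $i$ cannot be matched to fewer than $D_i(p)$ items, and since a $B$-matching never matches a buyer above her demand, $i$ is matched to exactly $D_i(p)$ items.

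I expect the only delicate point to be the bookkeeping in the feasibility check, specifically confirming that the alternating pattern of $\pi$ localizes the net degree increase to the single endpoint $i$, where spare capacity is available, while every other node keeps its degree. The hypothesis $J_Q^p \neq \emptyset$ plays no role in this combinatorial core and is inherited from the calling context of \textsc{Compute-Allocation-I}; it merely guarantees that the regime in which the lemma is applied is non-trivial.
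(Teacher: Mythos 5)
Your proof is correct and follows essentially the same route as the paper: a contradiction via the augmenting path from $i$ to an unmatched item in $\bar{J}$, which yields a strictly larger $B$-matching and violates maximality of $\mathcal{M}(\bar{G}^p)$. You simply make explicit the edge-toggling and feasibility bookkeeping that the paper's one-line argument leaves implicit, and your observation that $J_Q^p \neq \emptyset$ is not needed for the combinatorial core is accurate.
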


\begin{proof}
	Assume by contradiction that a buyer $i \in N(\bar{J})$
	gets less than $D_i(p)$ items. It implies that buyer $i$ is matched to less than $D_i(p)$ items in $\mathcal{M}(\bar{G}^p)$, that is, $|\{ (i,j) \in \mathcal{M}(\bar{G}^p) | j \in J \}| < D_i(p)$. 
	In addition, buyer  $i$ is connected to an item in $\bar{J}$ by an augmenting path. Therefore, one can assign one more item from $\bar{J}$ to buyers in $A^p$. It contradicts the fact that $\mathcal{M}(\bar{G}^p)$ is the maximum $B$-matching that minimizes the the number of items matched in $J^p_Q$.  \qed 
\end{proof}



\begin{lemma}\label{lem:vlunallocateditems}	
	In Procedure \textsc{Compute-Allocation-I},  all items in $\bar{J}$ can be allocated
	to the buyers in $Q^p$ at price $p$ per each.
\end{lemma}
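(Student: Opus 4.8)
The plan is to recast the claim as the existence of a demand-respecting $B$-matching that saturates $\bar J$ on the $Q^p$ side. Concretely, I would build the bipartite graph $H$ whose left vertices are the items of $\bar J$ (each with capacity $1$) and whose right vertices are the buyers of $Q^p$ (buyer $i$ carrying capacity $D_i(p)$), placing an edge $(j,i)$ whenever $j \in S_i$. If $H$ admits a $B$-matching saturating every item of $\bar J$, then assigning each such item to its matched buyer and charging $p$ per item yields the desired allocation; individual rationality is then immediate, since every $i \in Q^p$ has $v_i = p$ and $D_i(p)\ge 1$, so paying $p$ for each of at most $D_i(p)$ items keeps the utility non-negative and within budget.

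First I would check that every item of $\bar J$ is actually demanded by some buyer of $Q^p$, i.e. $\bar J \subseteq J^p_Q$, so that $H$ has no isolated left vertex; this is exactly the role of computing $\mathcal M(\bar G^p)$ as the maximum matching that moreover minimizes the number of $J^p_Q$-items matched, together with the triggering condition $|\mathcal M(G^p)| > |\mathcal M(\bar G^p)|$ which guarantees that the items freed on the $A^p$-side are precisely the ones the $Q^p$-buyers can take over. I would then reduce saturation of $\bar J$ to Hall's condition in its defect form: a saturating $B$-matching exists in $H$ if and only if $|J^*| \le \sum_{i \in N_{Q^p}(J^*)} D_i(p)$ for every $J^* \subseteq \bar J$, where $N_{Q^p}(J^*)$ denotes the $Q^p$-buyers demanding an item of $J^*$. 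To verify this I would argue by contradiction: a violating set $J^*$ caps the number of $\bar J$-items that $Q^p$ can absorb. Reading the gain $|\mathcal M(G^p)| - |\mathcal M(\bar G^p)| \ge 1$ through augmenting paths that originate at $Q^p$-buyers and terminate at items of $\bar J$ — each such path, by the same structure used in Lemma~\ref{lem:vlallocation}, leaves the loads of the $A^p$-buyers unchanged and newly matches exactly one item of $\bar J$ — a $Q^p$-side deficiency would force either $|\mathcal M(G^p)| \le |\mathcal M(\bar G^p)|$, contradicting the selling condition, or an alternating rearrangement of $\mathcal M(\bar G^p)$ matching fewer items of $J^p_Q$, contradicting its minimality.

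The main obstacle is precisely this verification of Hall's condition for all subsets $J^*$ simultaneously, rather than item by item. The delicate point is the $B$-matching bookkeeping: because demands may exceed one, the set of items a single $Q^p$-buyer could receive may outnumber its own demand $D_i(p)$, so the argument cannot be local and must exploit the two optimality properties of $\mathcal M(\bar G^p)$ jointly — maximality, to forbid $A^p$-augmenting paths into $\bar J$, and minimality of $J^p_Q$-matched, to ensure that the unmatched $J^p_Q$-items are exactly those for which the aggregate $Q^p$-demand suffices. Making the augmenting-path/flow correspondence precise in the presence of these capacities, and thereby excluding the deficient configuration, is where the real work lies; once Hall's condition is secured, the allocation together with the uniform per-item price $p$ follows at once.
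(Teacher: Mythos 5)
Your reformulation is the right one: the lemma is exactly the existence of a $B$-matching saturating $\bar J$ in the bipartite graph between $\bar J$ and $Q^p$ with capacities $D_i(p)$, individual rationality is immediate since $v_i=p$ for $i\in Q^p$, and the membership claim $\bar J\subseteq J^p_Q$ does follow from the minimality of $J^p_Q$-items matched via an alternating-path exchange. The genuine gap is the step you yourself flag as ``where the real work lies'': the verification of Hall's deficiency condition is never carried out, and the dichotomy you propose for the contradiction is not valid. Decompose $\mathcal M(G^p)\,\triangle\,\mathcal M(\bar G^p)$ into alternating trails: a trail starting at an item of a deficient set $J^*\subseteq\bar J$ cannot end at an item (because $\mathcal M(G^p)$ saturates all items) nor at an $A^p$-buyer with slack (by maximality of $\mathcal M(\bar G^p)$), so it terminates at a $Q^p$-buyer --- but that terminal buyer need \emph{not} lie in $N_{Q^p}(J^*)$, and flipping the trail matches $j$ to the \emph{first} buyer on it, typically an $A^p$-buyer. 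So a Hall violation produces only a rerouting of the $A^p$-side allocation, which neither contradicts the selling condition $|\mathcal M(G^p)|>|\mathcal M(\bar G^p)|$ nor yields a maximum matching with strictly fewer $J^p_Q$-items matched.

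Concretely, your two alternatives fail to exclude the deficient configuration when the minimizer is tied. Take items $x,y,w$; buyers $q_1,q_2\in Q^p$ with $S_{q_1}=\{x,y\}$, $S_{q_2}=\{w\}$, $D_{q_1}(p)=D_{q_2}(p)=1$; and one buyer $a\in A^p$ with $S_a=\{y,w\}$, $D_a(p)=1$ (e.g., $v_a=1.1p$, $b_a=1.5p$). Then $\mathcal M(G^p)$ matches all three items, $|\mathcal M(\bar G^p)|=1$, and since $S_a\subseteq J^p_Q$, \emph{both} maximum matchings $\{(a,y)\}$ and $\{(a,w)\}$ match exactly one $J^p_Q$-item, so $\{(a,w)\}$ is a legitimate output of the minimizing computation. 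It gives $\bar J=\{x,y\}$ with $N_{Q^p}(\bar J)=\{q_1\}$ of capacity $1<2$: Hall fails, yet both optimality properties of $\mathcal M(\bar G^p)$ that your contradiction relies on hold. To complete your plan you would need either a finer selection rule among the minimizing matchings (breaking ties so that the unmatched $J^p_Q$-items are simultaneously absorbable by $Q^p$) or to let the final step modify the $A^p$-side allocation along the alternating trails instead of assigning $\bar J$ directly. For what it is worth, the paper's own proof is a two-sentence direct argument (each item of $\bar J$ lies in some $Q^p$ preference set, and $\mathcal M(G^p)$ matching all items ``guarantees'' assignability): it asserts, rather than verifies, precisely the saturation property you isolated. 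Your write-up has the merit of making the missing step explicit, but as sketched the contradiction mechanism does not close it.
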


\begin{proof}
$\bar{J}$ contains the items that are matched in $\mathcal{M}(G^p)$ but \textit{not} matched in $\mathcal{M}(\bar{G}^p)$. It implies that any item in $\bar{J}$ is in the preference-set of at least one buyer in $Q^p$. Given that $M(G^p)$ matches all the items, it guarantees that all unmatched items in $\bar{J}$ can be assigned to buyers in $Q^{p}$. 
\qed
\end{proof}

\begin{algorithm}
\begin{algorithmic}[1]
	\Procedure{Compute-Allocation-I}{$I, J, \mathbf{S}, \mathbf{v}, \mathbf{b}, p$}
	\label{procedure:VL}
	\State{Let $J_{Q}^p$ be the set of items in the preference-sets of  buyers in $Q^p$ at price $p$, i.e.
		$J_{Q}^p = \bigcup_{i \in Q^p} S_i$;} \label{aucline:VLpreferencesets}
	\State{Compute a  maximum $B$-matching $\mathcal{M}(\bar{G}^{p})$ with minimum number of items in $J_{Q}^p$ matched;} \label{aucline:computeVLmatching}
	\State{Let $\bar{J}$ be the set of items that are not matched in $\mathcal{M}(\bar{G}^{p});$} \label{aucline:setunsolditemsVL}
	\State{Let $N(\bar{J})$ be the set of buyers that are connected to an item in $\bar{J}$ with an augmenting path in $\mathcal{M}(\bar{G}^p)$;} \label{aucline:setneighbourbuyersVL}
	\State{Let $\textsc{M} = \{ (i,j) \in \mathcal{M}(\bar{G}^p) | i \in N(\bar{J}) \}$;} \label{aucline:partialmatchingVL}
	\State{Assign items in $\bar{J}$ to buyers in $Q^p$ such that the allocation satisfies the supply constraint and budget constraint;}
	\State{Include the assignment of $\bar{J}$ to $\textsc{M}$;}
	\State{{\bf return} $\textsc{M}$}
	\EndProcedure
\end{algorithmic}
\caption{\textsc{Compute-Allocation-I}}\label{alg:VLalgorithm}
\end{algorithm}

\subsubsection{\textsc{Compute-Allocation-II}}
Algorithm \ref{alg:NVLalgorithm} is executed when  it is not possible to sell the same amount of items at a slightly higher price, but no buyer in $Q^{p}$ is responsible for that.
Notice that in this case, the buyers that decrease their demands are in $A^{p}$. Namely, the buyers drop their demands because budgets and not valuations.
Thus, there are no buyers in $Q^{p}$ that are relevant for us. Consequently, it is enough to compute an envy-free partial assignment at price $p+\epsilon$.
Similarly to the previous case, to preserve the envy-freeness of the outcome, the algorithm allocates items to buyers in $A^p$ who have an augmenting path to an unmatched items in $\mathcal{M}(G^{p+\epsilon})$. Given the fact that all items are matched in $\mathcal{M}(G^p)$, if an items is not matched to any buyer in $\mathcal{M}(G^{p+\epsilon})$, then it implies that all buyers who have augmenting paths to this item must be fully matched in $\mathcal{M}(G^{p+\epsilon})$. In other words, those buyers do not have enough budgets to buy one more item. Otherwise, the size of $\mathcal{M}(G^{p+\epsilon})$ can be increased.
We prove this property in the following lemma (proof in Appendix \ref{appendix:nvlallocation}). 

\begin{algorithm}
\begin{algorithmic}[1]
	\Procedure{Compute-Allocation-II}{$I, J, \mathbf{S}, \mathbf{v}, \mathbf{b}, p$}
	\label{procedure:NVL}
	\State{Compute a  maximum $B$-matching $\mathcal{M}(G^{p + \epsilon})$;} \label{aucline:computeNVLmatching}
	\State{Let $\bar{J}^{p + \epsilon} $
		be the set of items that are not matched in $\mathcal{M}(G^{p + \epsilon})$;} \label{aucline:setunsolditemsNVL}
	\State{Let $N(\bar{J}^{p + \epsilon})$ be the set of buyers that are connected to an item $\bar{J}^{p + \epsilon}$ with an augmenting path;} \label{aucline:setneighbourbuyersNVL}
	\State{Let $\textsc{M} = \{ (i,j) \in \mathcal{M}(G^{p + \epsilon}) | i \in N(\bar{J}^{p + \epsilon}) \}$;} \label{aucline:partialmatchingNVL}
	\State{Remove items in $\bar{J}^{p + \epsilon}$ from $J$;}
	\State{{\bf return} $\textsc{M}$}
	\EndProcedure
\end{algorithmic}
\caption{\textsc{Compute-Allocation-II}}\label{alg:NVLalgorithm}
\end{algorithm}

\begin{lemma}\label{lem:nvlallocation}
	Let $\mathcal{M}(G^{p+\epsilon})$ be a maximum matching on graph $G^{p+\epsilon}$, and let $\bar{J}^{p+\epsilon} \neq \emptyset$ be the set of items not matched in the maximum matching $\mathcal{M}(G^{p+\epsilon})$. Then for each buyers $i \in N(\bar{J}^{p+\epsilon})$ the followings hold:
(i) $v_i > p$, and (ii) buyer $i$ is matched to $D_i(p+\epsilon)$ items in $\mathcal{M}(G^{p+\epsilon})$.
\end{lemma}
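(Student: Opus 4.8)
The plan is to establish the two claims separately: part~(i) will follow immediately from the definitions of the buyer sets, while part~(ii) will rely on an augmenting-path exchange argument entirely analogous to the one used in Lemma~\ref{lem:vlallocation}.

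For part~(i), I would observe that $N(\bar{J}^{p+\epsilon})$ is, by construction, a set of buyer-vertices of the graph $G^{p+\epsilon}=(I^{p+\epsilon}\cup J',E^{p+\epsilon})$, so $N(\bar{J}^{p+\epsilon})\subseteq I^{p+\epsilon}=A^{p+\epsilon}\cup Q^{p+\epsilon}$. By the definitions of $A^{p+\epsilon}$ and $Q^{p+\epsilon}$, every buyer in $I^{p+\epsilon}$ satisfies $v_i\geq p+\epsilon$, and therefore $v_i>p$. No further work is required for this part.

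For part~(ii), fix a buyer $i\in N(\bar{J}^{p+\epsilon})$. By the definition of $N(\cdot)$ there is an augmenting path $\pi=\{i=y_1,z_1,\ldots,y_h,z_h=j\}$ in $\mathcal{M}(G^{p+\epsilon})$ terminating at an unmatched item $j\in\bar{J}^{p+\epsilon}$. Since $\mathcal{M}(G^{p+\epsilon})$ is a $B$-matching, buyer $i$ is matched to at most $D_i(p+\epsilon)$ items, so it suffices to exclude the strict inequality. I would assume for contradiction that $i$ is matched to strictly fewer than $D_i(p+\epsilon)$ items, and augment along $\pi$ by flipping the matched/unmatched status of every edge on the path. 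The result is again a valid $B$-matching of strictly larger size: the formerly unmatched item $z_h=j$ becomes matched (so $|\mathcal{M}|$ grows by one); each intermediate buyer $y_k$ with $2\leq k\leq h$ simultaneously relinquishes $z_{k-1}$ and acquires $z_k$, leaving its matched degree unchanged and hence still within its demand; each intermediate item $z_k$ with $1\leq k\leq h-1$ merely switches from $y_{k+1}$ to $y_k$ and stays matched to exactly one buyer; and the only count that increases is that of $y_1=i$, which rises by one and therefore remains at most $D_i(p+\epsilon)$ by the contradiction hypothesis. This larger $B$-matching contradicts the maximality of $\mathcal{M}(G^{p+\epsilon})$, so $i$ must in fact be matched to exactly $D_i(p+\epsilon)$ items.

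The main obstacle is the capacity bookkeeping in the augmentation step: one must confirm that flipping the entire path respects every demand bound simultaneously. This is delicate only at the two endpoints of the exchange---buyer $i$, whose bound is protected precisely by the assumption that it had not yet reached full demand, and item $j$, whose bound is protected because it was unmatched. The intermediate vertices are automatically safe because their matched degree is preserved. I would also remark that $G^{p+\epsilon}$ may contain buyers of both classes $A^{p+\epsilon}$ and $Q^{p+\epsilon}$, but this is harmless, since the exchange argument uses only the $B$-matching structure of the path and never the class of the buyers it traverses.
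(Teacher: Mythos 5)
Your proof is correct, and for part (ii) --- the heart of the lemma --- it is essentially the paper's own argument: assume a buyer $i \in N(\bar{J}^{p+\epsilon})$ is matched to strictly fewer than $D_i(p+\epsilon)$ items, augment along the path to an unmatched item of $\bar{J}^{p+\epsilon}$, and contradict the maximality of $\mathcal{M}(G^{p+\epsilon})$; you merely spell out the capacity bookkeeping (endpoint buyer, endpoint item, intermediate vertices) that the paper leaves implicit, which is a useful clarification rather than a deviation. The only genuine divergence is in part (i). You derive $v_i > p$ purely definitionally: $N(\bar{J}^{p+\epsilon}) \subseteq I^{p+\epsilon} = A^{p+\epsilon} \cup Q^{p+\epsilon}$, so $v_i \geq p+\epsilon > p$. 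This is shorter than the paper's argument and fully proves the statement as written. The paper instead exploits the algorithmic context in which the lemma is invoked --- \textsc{Compute-Allocation-II} runs only when $|\mathcal{M}(G^p)| > |\mathcal{M}(G^{p+\epsilon})|$ while $|\mathcal{M}(G^p)| = |\mathcal{M}(\bar{G}^p)|$ --- to conclude that the drop in matching size is caused by budgets rather than valuations, hence that the buyers involved have valuations strictly above $p+\epsilon$, i.e., they lie in $A^{p+\epsilon}$. That is a strictly stronger conclusion than the stated $v_i > p$, and it is the version implicitly relied on later (e.g., in Lemmas~\ref{lem:interestinprevprices} and~\ref{lem:unallocatedbuyers}, where buyers receiving items below their valuation are treated as members of $A^{p}$). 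So your definitional route buys simplicity and independence from the algorithm, while the paper's route records slightly more information for downstream use; both are valid proofs of the lemma as stated.
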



Notice that at the end of the procedure it is possible that some items remain unassigned but no buyer will demand them anymore.
This is a potential problem for the revenue, because many unassigned items can be translated in many not-extracted moneys.
But, the next lemma shows that the number of unassigned items is bounded (proof in Appendix \ref{appendix:boundunallocateditems}).

\begin{lemma}\label{lem:boundunallocateditems}
	Let $\mathcal{M}(G^{p+\epsilon})$ be a maximum matching on graph $G^{p+\epsilon}$, and let $\bar{J}^{p+\epsilon}$ be the set of items not matched in the maximum matching $\mathcal{M}(G^{p+\epsilon})$, then
	$|\bar{J}^{p+\epsilon} | \leq | N(\bar{J}^{p+\epsilon})|$. 
\end{lemma}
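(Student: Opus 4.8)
The plan is to introduce the set $T$ of all items reachable from the unmatched items $\bar{J}^{p+\epsilon}$ along alternating paths of $\mathcal{M}(G^{p+\epsilon})$, and then to squeeze $|T|$ between a lower bound expressed through the demands at $p+\epsilon$ and an upper bound expressed through the demands at $p$. Starting from the items of $\bar{J}^{p+\epsilon}$ I alternately follow non-matching edges (item $\to$ buyer) and matching edges (buyer $\to$ item); the buyers encountered are precisely $N(\bar{J}^{p+\epsilon})$. Two structural facts drive this step. First, by Lemma~\ref{lem:nvlallocation} every buyer $i \in N(\bar{J}^{p+\epsilon})$ is fully matched, i.e. matched to exactly $D_i(p+\epsilon)$ items, and all of those items lie in $T$ (they are reached from $i$ by a matching edge). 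Second, every \emph{matched} item of $T$, i.e. every item of $T \setminus \bar{J}^{p+\epsilon}$, is reached at an even step and is therefore the matching partner of some buyer of $N(\bar{J}^{p+\epsilon})$. These two facts put the matching edges incident to $N(\bar{J}^{p+\epsilon})$ in bijection with the matched items of $T$, yielding the exact count
\[
|T \setminus \bar{J}^{p+\epsilon}| = \sum_{i \in N(\bar{J}^{p+\epsilon})} D_i(p+\epsilon).
\]

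Next I would establish the upper bound by passing to price $p$. The key observation, specific to the regime $b_i \ge v_i$, is that every $i \in A^p$ satisfies $b_i \ge v_i > p$, so for sufficiently small $\epsilon$ the edge set incident to such a buyer is identical at $p$ and at $p+\epsilon$ (only the capacity $D_i$ changes). Hence an $A^p$-buyer is adjacent to $T$ at price $p$ if and only if it is at $p+\epsilon$, which by the reachability definition happens exactly for the buyers of $N(\bar{J}^{p+\epsilon})$. Now I invoke the defining condition of the \textsc{Compute-Allocation-II} branch, $|\mathcal{M}(G^p)| = |\mathcal{M}(\bar{G}^p)|$, together with the fact (noted in the description of the procedure) that $\mathcal{M}(G^p)$ matches every item of $J'$: it follows that $\mathcal{M}(\bar{G}^p)$ also matches every item, using only buyers of $A^p$. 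In this matching each item of $T$ is matched, and can only be matched to an $A^p$-buyer adjacent to $T$, i.e. to a buyer of $N(\bar{J}^{p+\epsilon})$. Counting the edges of $\mathcal{M}(\bar{G}^p)$ incident to $N(\bar{J}^{p+\epsilon})$, with buyer $i$ contributing at most $D_i(p)$ of them, gives
\[
|T| \le \sum_{i \in N(\bar{J}^{p+\epsilon})} D_i(p).
\]

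Combining the two relations and subtracting yields
\[
|\bar{J}^{p+\epsilon}| = |T| - |T \setminus \bar{J}^{p+\epsilon}| \le \sum_{i \in N(\bar{J}^{p+\epsilon})} \bigl(D_i(p) - D_i(p+\epsilon)\bigr).
\]
The proof then closes by arguing that, for $\epsilon$ chosen small enough (smaller than the least gap between consecutive ratios $b_i/k$ over all buyers and integers $k$), each buyer's demand drops by at most one as the price rises from $p$ to $p+\epsilon$, i.e. $D_i(p) - D_i(p+\epsilon) \le 1$, since the relevant floor function crosses at most one integer. Summing over $N(\bar{J}^{p+\epsilon})$ gives $|\bar{J}^{p+\epsilon}| \le |N(\bar{J}^{p+\epsilon})|$.

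I expect the main obstacle to be the bookkeeping that ties together the two distinct matchings $\mathcal{M}(G^p)$ and $\mathcal{M}(G^{p+\epsilon})$: one must check that the reachable item set $T$ is ``closed'' in the right sense, so that at \emph{both} prices its items are serviced only by buyers of $N(\bar{J}^{p+\epsilon})$. This closure leans on the hypothesis $b_i \ge v_i$, which stabilizes the edge set across the infinitesimal price increase, and on the branch condition $|\mathcal{M}(G^p)| = |\mathcal{M}(\bar{G}^p)|$, which lets all items be matched without the $Q^p$-buyers (who would otherwise be spurious neighbours of $T$ at price $p$). The secondary technical point is pinning down the smallness of $\epsilon$ so that the uniform per-buyer demand drop of at most one actually holds.
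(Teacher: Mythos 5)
Your proof is correct and follows essentially the same route as the paper's: the paper's (far terser) argument rests on exactly the two facts you isolate, namely that at price $p$ all remaining items are matched using only buyers of $A^p$ (the branch condition $|\mathcal{M}(G^p)| = |\mathcal{M}(\bar{G}^p)|$ plus the all-items-matched invariant) and that each such buyer's demand drops by at most one as the price rises to $p+\epsilon$. Your alternating-path set $T$ and the resulting count $|\bar{J}^{p+\epsilon}| \leq \sum_{i \in N(\bar{J}^{p+\epsilon})} \bigl(D_i(p) - D_i(p+\epsilon)\bigr) \leq |N(\bar{J}^{p+\epsilon})|$ are a rigorous elaboration of the charging step the paper leaves implicit, i.e.\ that the newly unmatched items can be attributed injectively to lost demand units of buyers in $N(\bar{J}^{p+\epsilon})$.
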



\subsection{Main result}

Finally, we are ready to prove that the outcome computed by Algorithm \ref{alg:main} is envy-free and achieves a $4$-approximation to the optimal bundle-price envy-free revenue.

We start with some auxiliary lemmata:

\begin{lemma}\label{lem:uniqueprice}
	Let $\langle\mathbf{X}, \mathbf{p}\rangle$ be the outcome obtained by Algorithm \ref{alg:main}. If $X_i \neq \emptyset$, the buyer $i$ obtains
	all the items in $X_i$ at a unique price-per-item $\bar{p}_i = \frac{p_i}{|X_i|}$.
\end{lemma}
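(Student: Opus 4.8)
The plan is to prove the slightly stronger statement that every buyer $i$ with $X_i\neq\emptyset$ acquires all the items of $X_i$ inside a single invocation of a selling procedure, and that each invocation charges one common per-item price. The second half is immediate from Algorithm~\ref{alg:main}: a single execution of \textsc{Compute-Allocation-I} increments $p_i$ by the same amount $p$ for every matched edge (including the items of $\bar J$ handed to buyers in $Q^p$), and a single execution of \textsc{Compute-Allocation-II} increments it by the same amount $p+\epsilon$; hence all items sold in one invocation carry one per-item price. So the whole statement reduces to showing that no buyer is served in two distinct invocations, after which $\bar p_i=p_i/|X_i|$ is well defined.

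First I would record an invariant: whenever a buyer $i$ is served at per-item price $\pi$ and receives her full demand $D_i(\pi)$, she can no longer afford an item at any price $\geq\pi$. Indeed, by the definition of $D_i$, either $D_i(\pi)=\lfloor b_i/\pi\rfloor$, in which case her residual budget is $b_i-\pi\lfloor b_i/\pi\rfloor<\pi$, or $D_i(\pi)=|S_i|$, in which case her available preferred items are exhausted and stay exhausted since items are only ever removed from $J$. For the augmenting-path buyers this full-demand hypothesis is exactly Lemma~\ref{lem:vlallocation} (for \textsc{Compute-Allocation-I}, where $\pi=p$) and Lemma~\ref{lem:nvlallocation} (for \textsc{Compute-Allocation-II}, where $\pi=p+\epsilon$). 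The only buyers served without a full-demand guarantee are those in $Q^p$ that receive items from $\bar J$ in \textsc{Compute-Allocation-I}; for them I would instead invoke $v_i=p=\pi$, so that $D_i(\pi')=0$ at any strictly larger price $\pi'>v_i$.

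Combining these with the fact that the charged prices are non-decreasing across the while loop finishes the argument: if $i$ were served again at some later per-item price $\pi'\geq\pi$, then the invariant (residual budget $<\pi\leq\pi'$, or an empty preference set, or $v_i=\pi<\pi'$) forces $D_i(\pi')=0$ and hence no allocation, a contradiction. The hard part will be justifying that the later charged price genuinely satisfies $\pi'\geq\pi$ in the mixed case, where $i$ was served at $\pi=p+\epsilon$ by \textsc{Compute-Allocation-II} and a subsequent invocation of \textsc{Compute-Allocation-I} would charge a raw price $p'$: one must argue $p'\geq p+\epsilon$, i.e.\ that the price strictly increases past a cleared critical price rather than triggering another sale at the same $p$ (which would lower the per-item charge from $p+\epsilon$ to $p$). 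This is precisely the role of removing $\bar J^{p+\epsilon}$ (and the items no longer demanded) at the end of \textsc{Compute-Allocation-II}: after that removal the price $p$ is no longer critical, so line~\ref{aucline:incprice} must raise it before any further selling. I would isolate this monotonicity as the key lemma, and the remaining work is the routine case analysis above.
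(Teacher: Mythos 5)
Your proposal is correct, and it uses the same core ingredients as the paper's own proof (Appendix C): full demand for buyers reached by augmenting paths (Lemmas~\ref{lem:vlallocation} and~\ref{lem:nvlallocation}), and $v_i=p$ for the $Q^p$ buyers served from $\bar{J}$, so that in every case a served buyer either exhausts her budget below the charged price, exhausts her available preference set, or has zero demand at any strictly higher price. Where you differ is the organization, and your version is genuinely more careful on one point. The paper argues by contradiction over the \emph{minimum} charged price $\bar{p}$ and concludes statically that a buyer served at $\bar{p}$ ``cannot afford any item at any price $p>\bar{p}$''; this tacitly assumes the purchase at the minimum price is the \emph{first in time}, since otherwise $D_i(\bar{p})$ is computed from a residual budget and full demand at $\bar{p}$ does not contradict an earlier purchase at a higher price. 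You instead order the argument temporally and isolate exactly the monotonicity fact the paper leaves unstated: the only dangerous scenario is \textsc{Compute-Allocation-II} charging $p+\epsilon$ at critical price $p$ followed by another sale at the same auction price $p$ charging only $p$. Your resolution is right, and when you write out the ``key lemma'' you should make the certificate explicit: after \textsc{Compute-Allocation-II} removes the sold items, the items of $\bar{J}^{p+\epsilon}$, and the undemanded items, the surviving items are precisely those matched in $\mathcal{M}(G^{p+\epsilon})$ to buyers outside $N(\bar{J}^{p+\epsilon})$, whose budgets are untouched; restricting that matching shows $|\mathcal{M}(G^{p+\epsilon})|$ already equals the number of remaining items, so $|\mathcal{M}(G^p)| > |\mathcal{M}(G^{p+\epsilon})|$ cannot hold and line~\ref{aucline:incprice} must strictly raise the price before any further sale. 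For completeness, note symmetrically that after \textsc{Compute-Allocation-I} at $p$ the same price can only trigger \textsc{Compute-Allocation-II} (since all remaining items stay matched in $\mathcal{M}(\bar{G}^p)$), which charges $p+\epsilon\geq p$ and is blocked by your invariant. Net effect: your route proves the slightly stronger claim that no buyer is served in two invocations and fills a temporal gap the paper glosses over, at the cost of one extra lemma; the paper's minimum-price argument is shorter but silently relies on the price monotonicity you establish.
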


The proof of Lemma \ref{lem:uniqueprice} is in Appendix~\ref{appendix:uniqueprice}. Then,
%
%
%

\begin{lemma} \label{lem:interestinprevprices}
	If buyer $i$ obtains $X_i$ at price-per-item $\bar{p}_i$, then all the items assigned at a price $p' < \bar{p}_{i}$ are not in her preference-set.
\end{lemma}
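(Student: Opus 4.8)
The plan is to argue by contradiction, leveraging Lemma~\ref{lem:uniqueprice} together with the augmenting-path structure underlying the two selling procedures. Suppose, toward a contradiction, that some item $j \in S_i$ is assigned (to some buyer) at per-item price $q < \bar{p}_i$. Since buyer $i$ actually purchases $X_i$ at price-per-item $\bar{p}_i$, her demand at $\bar{p}_i$ must be positive, which forces $\bar{p}_i \le v_i$; hence $q < \bar{p}_i \le v_i$, so $v_i > q$. Moreover, because $b_i \ge v_i > q$ we have $\lfloor b_i/q \rfloor \ge 1$, and therefore $D_i(q) > 0$. Thus at the moment $j$ is sold the buyer $i$ belongs to $A^{q}$, and since $j \in S_i$ is still unsold at that moment, the edge $(i,j)$ is present in the preference-graph on which the selling matching is computed (namely $\bar{G}^{p}$ with $q=p$ in \textsc{Compute-Allocation-I}, or $G^{p+\epsilon}$ with $q=p+\epsilon$ in \textsc{Compute-Allocation-II}).

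First I would show that this membership already implies $i \in N(\bar{J})$, where $\bar{J}$ is the set of unmatched items driving the current selling event. There are two cases. If $j$ itself lies in $\bar{J}$ (which can only happen in \textsc{Compute-Allocation-I}, where items of $\bar{J}$ are assigned to $Q^p$), then $(i,j)$ is a non-matching edge ending in an unmatched item, so the length-one path $\{i, j\}$ is an augmenting path and $i \in N(\bar{J})$ directly. Otherwise $j$ is matched, and since only buyers of $N(\bar{J})$ receive items in $\textsc{M}$, item $j$ must be matched to some buyer $i' \in N(\bar{J})$; note $i' \neq i$, for otherwise Lemma~\ref{lem:uniqueprice} would give $\bar{p}_i = q < \bar{p}_i$. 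Letting $\pi'$ be an augmenting path from $i'$ to an unmatched item of $\bar{J}$, I would prepend the non-matching edge $(i,j)$ and the matching edge $(j,i')$ to $\pi'$ to obtain an augmenting path from $i$ to an unmatched item, so again $i \in N(\bar{J})$.

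Finally, I would invoke Lemma~\ref{lem:vlallocation} (for \textsc{Compute-Allocation-I}) or Lemma~\ref{lem:nvlallocation} (for \textsc{Compute-Allocation-II}): every buyer in $N(\bar{J})$ is matched to her full demand $D_i(q)>0$ items at the selling price $q$, so buyer $i$ purchases items at price-per-item $q$. This contradicts Lemma~\ref{lem:uniqueprice}, which asserts that buyer $i$ pays a single price-per-item $\bar{p}_i$, because $q < \bar{p}_i$. The main obstacle is the path-splicing step in the second case: one must verify that the concatenation is a genuine (simple) augmenting path in the sense of the definition, i.e.\ that the alternation of matching and non-matching edges is preserved (the edge $(i,j)$ is non-matching precisely because $j$ is matched to $i'\neq i$, and $(j,i')$ is a matching edge) and that $i$ and $j$ do not already appear on $\pi'$. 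The latter is harmless: if $i$ already lies on $\pi'$ then $i \in N(\bar{J})$ by definition, and $j$ cannot lie on $\pi'$ since its unique matching partner is the starting buyer $i'$ of $\pi'$. A secondary point requiring care is to treat the two procedures uniformly, keeping track of the fact that the per-item selling price $q$ equals $p$ in one procedure and $p+\epsilon$ in the other.
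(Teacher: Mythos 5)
Your proof is correct and takes essentially the same route as the paper's: assume for contradiction that some $j \in S_i$ is sold at a price below $\bar{p}_i$, observe that $i \in A^{p}$ at that moment, use augmenting-path connectivity to invoke Lemma~\ref{lem:nvlallocation} (resp.\ Lemma~\ref{lem:vlallocation}) and conclude that $i$ would have received her full demand at the lower price, contradicting the unique per-item price guaranteed by Lemma~\ref{lem:uniqueprice}. The only difference is that you carefully verify what the paper asserts in one line---namely that $i$ is connected by an augmenting path to an unmatched item, which you establish by splicing through the buyer who received $j$ and by handling the $j \in \bar{J}$ case of \textsc{Compute-Allocation-I} separately---so your argument is a faithful, more detailed rendering of the same proof rather than a different approach.
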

\begin{proof}
	Assume by contradiction that an item $j \in S_i$ is allocated at a price $p' < \bar{p}_{i}$. Then buyer $i$ has an augmenting path to $j$, and $i \in A^{p}$ (since her valuation is at least equal to $\bar{p}_{i}>p'$). Thus by Lemma~\ref{lem:nvlallocation}, we know that buyer $i$ would have obtained all her items at price $p'$.
	Then, it concludes that buyer $i$ is not interested in any item allocated at lower prices. 
	\qed
\end{proof}

\begin{lemma}\label{lem:unallocatedbuyers}
	If buyer $i$ does not obtain any item, i.e., $X_i = \emptyset$, then all items in $S_i$ are sold at a price greater than or equal to $v_i$.
\end{lemma}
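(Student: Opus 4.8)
The plan is to argue by contraposition: I will show that if some item $j \in S_i$ is removed from $J$ (either sold, or discarded by a selling procedure) at a price $p' < v_i$, then buyer $i$ necessarily receives at least one item, contradicting $X_i = \emptyset$. So I fix such a $j$ and such a $p'$ and focus on the single iteration of Algorithm~\ref{alg:main} in which $j$ leaves $J$. Note that this event can only happen inside a call to \textsc{Compute-Allocation-I} (with selling price $p'$) or \textsc{Compute-Allocation-II} (with selling price $p' = p + \epsilon$), so throughout I work with the corresponding maximum matching $\mathcal{M}$ and its set $\bar{J}$ of unmatched items.

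First I would establish that buyer $i$ is active at that price. Since $b_i \geq v_i > p'$ we have $\lfloor b_i/p' \rfloor \geq 1$, hence $D_i(p') = \min\{\lfloor b_i/p' \rfloor, |S_i|\} \geq 1$; combined with $v_i > p'$ this places $i \in A^{p'}$. In particular $i$ is a vertex of the preference-graph used in that iteration ($\bar{G}^{p'}$ for \textsc{Compute-Allocation-I}, $G^{p'}$ for \textsc{Compute-Allocation-II}), and since $j \in S_i$ is still available the edge $(i,j)$ belongs to that graph.

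The heart of the argument is to exhibit an augmenting path from $i$ to some item of $\bar{J}$, which will give $i \in N(\bar{J})$. Two cases arise. If $j$ itself is unmatched by $\mathcal{M}$, then $j \in \bar{J}$ and $\{i, j\}$ is already a length-one augmenting path, so $i \in N(\bar{J})$ at once. Otherwise $j$ is matched; but an item is removed in these procedures only if it is either unmatched (hence in $\bar{J}$) or matched to a buyer of $N(\bar{J})$, so $j$ is matched to some $i'' \in N(\bar{J})$. If $i'' = i$ then buyer $i$ obtains $j$ and we are done, so assume $i'' \neq i$. I would then splice the edges $(i,j)$ and $(j,i'') \in \mathcal{M}$ onto the front of the augmenting path that witnesses $i'' \in N(\bar{J})$, and verify the four defining conditions: $(i,j)$ is a non-matching edge with $j \in S_i$, $(j,i'')$ is a matching edge, and the tail is $i''$'s augmenting path. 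I also need the concatenation to be \emph{simple}: $i$ is a fresh vertex, and $j$ cannot recur among the non-matching edges of $i''$'s path because $j$ is matched to $i''$. This yields $i \in N(\bar{J})$ in the matched case as well.

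Once $i \in N(\bar{J})$ is secured, Lemma~\ref{lem:vlallocation} (for \textsc{Compute-Allocation-I}) or Lemma~\ref{lem:nvlallocation} (for \textsc{Compute-Allocation-II}) guarantees that buyer $i$ is matched to exactly $D_i(p') \geq 1$ items in $\mathcal{M}$, so she is allocated at least one item in this iteration, contradicting $X_i = \emptyset$. I expect the splicing step to be the main obstacle: one must check carefully that the concatenated walk satisfies the augmenting-path definition and is simple, and that the reduction goes through uniformly for both procedures (including, in \textsc{Compute-Allocation-I}, the sub-case where $j \in \bar{J}$ is handed to a buyer of $Q^p$ rather than matched in $\mathcal{M}(\bar{G}^p)$). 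Everything else is a direct demand computation plus an appeal to the two allocation lemmas.
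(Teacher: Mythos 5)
Your proof is correct and follows essentially the same route as the paper's: argue by contradiction that buyer $i$ is active at the selling price, place $i$ in $N(\bar{J})$ via an augmenting path, and invoke Lemma~\ref{lem:vlallocation} or Lemma~\ref{lem:nvlallocation} to conclude that $i$ would have received $D_i(p')\geq 1$ items. The only difference is one of detail: the paper merely asserts that ``$i$ is connected with an augmenting path to $j$'' and cites Lemma~\ref{lem:nvlallocation}, whereas you explicitly carry out the path-splicing through the buyer matched to $j$, verify simplicity, and handle the discarded-item and $Q^p$ sub-cases -- steps the paper leaves implicit.
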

\begin{proof}
	Observe that, since $i$ does not obtain any item, she will have a positive demand since until the price reaches her valuation.
Assume by contradiction that an item $j \in S_i$ has been sold at a price $p < v_i$.
Let $i'$ be the buyer that obtains $j$ at $p$. Then $i$ is connected with an augmenting path to $j$ at $p$.
But since $j \in S_i$, $D_i(p) > 0$, and $i \in A^{p}$, then by Lemma \ref{lem:nvlallocation} the buyer $i$ would have been allocated at $p$ as well. Contradiction.
	\qed
\end{proof}

Now we show that Algorithm \ref{alg:main} is envy-free.

\begin{theorem}
	The outcome $\langle \mathbf{X}, \mathbf{p}\rangle$ produced by Algorithm \ref{alg:main} is envy-free.
\end{theorem}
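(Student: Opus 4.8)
The plan is to exhibit an explicit item-pricing vector $\rho$ that witnesses envy-freeness. I would set $\rho_j$ equal to the price at which item $j$ is sold by Algorithm~\ref{alg:main} (either the critical price $p$ in \textsc{Compute-Allocation-I} or $p+\epsilon$ in \textsc{Compute-Allocation-II}), and I would price every item that is never sold at a value $M$ strictly larger than $\max_{i} v_i$. With this choice, payment consistency $p_i = \sum_{j \in X_i}\rho_j$ is immediate: by Lemma~\ref{lem:uniqueprice} buyer $i$ acquires all of $X_i$ at a single per-item price $\bar p_i$, and $\rho_j = \bar p_i$ for every $j \in X_i$, so $\sum_{j \in X_i}\rho_j = |X_i|\,\bar p_i = p_i$.

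Next I would reduce the envy condition to a clean optimization. Since every item outside $S_i$ contributes zero value and a non-negative price, and every unsold item is priced at $M > v_i$ (hence has strictly negative marginal utility), any utility-maximizing bundle $X'$ for buyer $i$ may be taken to consist only of \emph{sold} items of $S_i$; for such a bundle $u_i(X',\sum_{j\in X'}\rho_j) = \sum_{j \in X'}(v_i - \rho_j)$ whenever the budget is respected. The crucial structural input is that every sold item of $S_i$ has price at least $\bar p_i$, which is exactly the contrapositive of Lemma~\ref{lem:interestinprevprices}. Consequently each term satisfies $v_i - \rho_j \le v_i - \bar p_i$, and since each such price is at least $\bar p_i$, the budget constraint $\sum_{j\in X'}\rho_j \le b_i$ forces $|X'| \le \lfloor b_i/\bar p_i\rfloor$; combining these yields $u_i(X') \le \min\{\lfloor b_i/\bar p_i\rfloor,\, |S_i|\}\,(v_i - \bar p_i) = D_i(\bar p_i)\,(v_i-\bar p_i)$.

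It then remains to check that $X_i$ attains this bound, for which I would split on the role of buyer $i$. If $X_i = \emptyset$, Lemma~\ref{lem:unallocatedbuyers} gives $\rho_j \ge v_i$ for every $j \in S_i$, so every candidate bundle has non-positive utility and the empty allocation is optimal. If $i$ is allocated through the set $N(\bar J)$ in either selling procedure, Lemmas~\ref{lem:vlallocation} and~\ref{lem:nvlallocation} guarantee $|X_i| = D_i(\bar p_i)$, so $u_i(X_i) = D_i(\bar p_i)(v_i-\bar p_i)$ matches the upper bound exactly. The only remaining allocated buyers are those in $Q^p$ served from $\bar J$ inside \textsc{Compute-Allocation-I}; for them $\bar p_i = p = v_i$, so $v_i - \bar p_i = 0$, every bundle of sold $S_i$-items yields utility $0$, and since the assignment respects the budget we have $u_i(X_i)=0$, again optimal.

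The step I expect to be the main obstacle is the second paragraph: establishing that buyer $i$ cannot profit by reaching for items that were sold to \emph{other} buyers. This is where the two special properties of the partial assignments must be used in tandem — Lemma~\ref{lem:interestinprevprices} to rule out any cheaper desirable item in $S_i$, and the full-demand Lemmas~\ref{lem:vlallocation}/\ref{lem:nvlallocation} to certify that the budget is already exhausted at price $\bar p_i$ — so that no rearrangement can simultaneously increase the item count and respect the budget. Correctly isolating the degenerate $Q^p$ buyers, where the marginal utility vanishes and the full-demand guarantee is unavailable, is the subtle bookkeeping that makes the case analysis exhaustive.
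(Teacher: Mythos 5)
Your proof is correct and follows essentially the same route as the paper's: the same case split (unallocated buyers via Lemma~\ref{lem:unallocatedbuyers}; allocated buyers with $\bar{p}_i = v_i$ versus $\bar{p}_i < v_i$) resting on the same pillars, namely Lemmas~\ref{lem:uniqueprice}, \ref{lem:interestinprevprices}, \ref{lem:vlallocation} and~\ref{lem:nvlallocation}. The only refinement is that you exhibit the witnessing item-price vector explicitly (pricing unsold items at $M > \max_i v_i$) and establish the bundle-optimality bound $u_i(X') \le D_i(\bar{p}_i)(v_i - \bar{p}_i)$ over \emph{all} bundles $X' \subseteq J$, which verifies the paper's item-pricing definition of envy-freeness more literally than the paper's own buyer-versus-buyer phrasing.
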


\begin{proof}
	First, by Lemma \ref{lem:unallocatedbuyers}, we know that the buyers that do not obtain any item do not
	envy anyone. Furthermore, by Lemma \ref{lem:uniqueprice}, we know that, for the rest of buyers, they obtain all items in $X_i$ at a unique per-item-price $\bar{p}_i = \frac{p_i}{|X_i|}$. The rest of the proof is divided into two cases.
	\begin{enumerate}
		\item $\bar{p}_i = v_i$: The buyer $i$ would not envy any buyer $j$ that gets her bundle $X_j$ at a price-per-item $\bar{p}_j \geq \bar{p}_i$. 
		Moreover, by Lemma \ref{lem:interestinprevprices}, buyer $i$ is not interested in any item allocated at a price $ p < \bar{p}_i$.
		This implies that buyer $i$ cannot envy any buyer $j$ who obtains the bundle at price-per-item $\bar{p}_j \leq \bar{p}_i$.
		\item $\bar{p}_i < v_i$:  Buyer $i$ obtains $X_i$ in either \textsc{Compute-Allocation-I} or \textsc{Compute-Allocation-II}. In both cases, by Lemma \ref{lem:vlallocation} and \ref{lem:nvlallocation}, buyer $i$ obtains $D_i(\bar{p}_i) = |X_i|$ items. Hence, buyer $i$ does not envy any buyer who obtains her bundle at price-per-item
		$p \geq \bar{p}_i$.
		Moreover, by Lemma \ref{lem:interestinprevprices}, buyer $i$ is not interested in any item allocated at a price $p < \bar{p}_i$.
		This implies that buyer $i$ cannot envy any buyer $j$ who obtains the bundle at price-per-item $\bar{p}_j \leq \bar{p}_i$.
	\end{enumerate}
	Thus we conclude that Algorithm~\ref{alg:main} is envy-free.
	\qed
\end{proof}

Now, we show that the outcome computed by Algorithm~\ref{alg:main} is a $4$-approximation to the optimal bundle-price envy-free revenue.

\begin{theorem}
	Algorithm~\ref{alg:main} achieves a $4-$approximation to the optimal envy-free revenue when all buyers have budgets that are at least their valuations.
\end{theorem}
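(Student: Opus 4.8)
The plan is to charge the optimal bundle-price envy-free revenue $\text{OPT}$ against the revenue $\mathcal{R}$ collected by Algorithm~\ref{alg:main}, buyer by buyer, exploiting the fact that our auction extracts essentially the full budget (up to the granularity $\epsilon$) from every buyer who receives a positive allocation. Let $\langle \mathbf{X}^*, \mathbf{p}^* \rangle$ denote an optimal bundle-price envy-free outcome and $\langle \mathbf{X}, \mathbf{p} \rangle$ the outcome of Algorithm~\ref{alg:main}. First I would partition the buyers into those served by the algorithm ($X_i \neq \emptyset$) and those unserved ($X_i = \emptyset$). For a served buyer~$i$, Lemmas~\ref{lem:vlallocation} and~\ref{lem:nvlallocation} guarantee she is matched to exactly $D_i(\bar p_i)$ items at her per-item price $\bar p_i$, so her payment is $p_i = \bar p_i \cdot D_i(\bar p_i) \geq \min\{b_i - \bar p_i, \, v_i\cdot|S_i|\}$, i.e.\ the algorithm extracts almost her whole budget (the $-\bar p_i$ slack comes from the floor in the demand $\lfloor b_i/\bar p_i \rfloor$, and in the \textsc{Compute-Allocation-II} case an extra $\epsilon$-term). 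Since any envy-free outcome can charge buyer~$i$ at most $\min\{b_i, v_i \cdot |X_i^*|\} \le \min\{b_i, v_i \cdot |S_i|\}$, the served buyers contribute at most a constant factor more in $\text{OPT}$ than in $\mathcal{R}$.

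The crux is bounding the revenue that $\text{OPT}$ extracts from buyers that our algorithm leaves \emph{unserved}. Here I would invoke Lemma~\ref{lem:unallocatedbuyers}: if $X_i = \emptyset$ then every item in $S_i$ was sold by the algorithm at a price $\ge v_i$. Hence all the ``lost'' items on which $\text{OPT}$ could collect money from unserved buyers are items the algorithm did sell, and each such item was sold at a per-item price at least as large as the valuation $v_i$ of any unserved buyer desiring it. The key step is to set up a routing/charging argument: I would map each item $j$ that $\text{OPT}$ assigns to an unserved buyer $i$ (and for which $\text{OPT}$ collects revenue attributable to~$j$) to the payment the algorithm actually received for that same item~$j$. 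Because the algorithm's selling price for $j$ is at least $v_i$, and envy-freeness caps $\text{OPT}$'s per-item charge to~$i$ at $v_i$, the algorithm's revenue on $j$ dominates $\text{OPT}$'s per-item charge on $j$ to any unserved buyer. Summing over items gives that the total $\text{OPT}$-revenue from unserved buyers is bounded by the algorithm's total revenue, up to the same constant.

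Combining the two pieces, I would split $\text{OPT} = \text{OPT}_{\text{served}} + \text{OPT}_{\text{unserved}}$ and bound each summand by a constant multiple of $\mathcal{R}$. The factor-$4$ loss should decompose as a factor of roughly $2$ from the budget-extraction slack (the $\lfloor b_i/\bar p_i\rfloor$ floor means we may lose up to half the budget when a buyer's demand is small, e.g.\ $\lfloor b_i/\bar p_i\rfloor = 1$ while $b_i/\bar p_i$ is just under $2$), and a further factor of $2$ from the unallocated-item overhead bounded in Lemma~\ref{lem:boundunallocateditems} together with the charging of unserved buyers. I would finally take $\epsilon \to 0$ so that the $\epsilon$-dependent terms from \textsc{Compute-Allocation-II} vanish and do not degrade the ratio. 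The main obstacle I anticipate is the charging argument for unserved buyers: one must ensure the item-to-payment map is well-defined and that no algorithm-item is charged by more than a bounded number of $\text{OPT}$ contributions, which is exactly where Lemma~\ref{lem:boundunallocateditems} (bounding $|\bar J^{p+\epsilon}| \le |N(\bar J^{p+\epsilon})|$) and the matching structure must be used carefully to avoid double-counting.
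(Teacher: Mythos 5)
Your overall architecture is close to the paper's, but your two-way partition into served/unserved buyers hides a genuine gap: the claim that \emph{every} served buyer is matched to exactly $D_i(\bar{p}_i)$ items is false for buyers served at a price equal to their valuation. Lemma~\ref{lem:vlallocation} and Lemma~\ref{lem:nvlallocation} apply only to buyers in $N(\bar{J})\subseteq A^p$; the buyers in $Q^p$ are served in \textsc{Compute-Allocation-I} simply by handing them the leftover items $\bar{J}$, and they may receive far fewer items than their demand, so the algorithm extracts essentially nothing from their budgets. Concretely, take a buyer with $v_1=1$, $b_1=100$, $S_1=\{c_0,c_1,\ldots,c_k\}$, and for each $t\in\{1,\ldots,k\}$ a buyer with $v_t=b_t=2$ and $S_t=\{c_t\}$. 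At the critical price $p=1$ the minimum-overlap matching $\mathcal{M}(\bar{G}^1)$ matches each small buyer to her own item, $\bar{J}=\{c_0\}$, $N(\bar{J})=\emptyset$, and buyer $1$ receives only $c_0$ and pays $1$, while your claimed lower bound $\min\{b_1-\bar{p}_1,\,v_1|S_1|\}=k+1$; an optimal bundle-price envy-free outcome can indeed charge her about $k+1$ (sell her the whole bundle). So the served-side inequality fails by an unbounded factor, and the budget-extraction argument cannot cover these buyers. (A smaller inaccuracy: even for buyers served strictly below their value, the payment is $\bar{p}_i D_i(\bar{p}_i)$, not $\geq v_i|S_i|$ when demand is $|S_i|$-limited; this case is vacuous, though, because a buyer matched to all of $S_i$ has no unmatched incident edge and hence cannot lie in $N(\bar{J})$ or $N(\bar{J}^{p+\epsilon})$, so for buyers in $I^{NVL}$ demand is always budget-limited.)

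The paper avoids the gap with a three-way partition: $I^{NVL}$ (served strictly below value), for which the chain $b_i\leq \bar{p}_i(D_i(\bar{p}_i)+1)\leq 2\bar{p}_i D_i(\bar{p}_i)$ gives $\mathcal{R}_{I^{NVL}}^{\mathbf{OPT}}\leq 2\mathcal{R}(\mathbf{X},\mathbf{p})$, versus $\mathcal{I}=I^{VL}\cup I^{UN}$, i.e., it lumps the value-limited served buyers together with the unserved ones and bounds $\mathcal{R}_{\mathcal{I}}^{\mathbf{OPT}}$ item by item: each item $j\in\bigcup_{i\in\mathcal{I}}S_i$ contributes at most $\max\{v_i \mid i\in\mathcal{I},\, j\in S_i\}$ to OPT, and this is charged either to the algorithm's own sale of $j$ (whose price is at least that maximum, by the augmenting-path argument you already use for unserved buyers) or, for items left unsold in \textsc{Compute-Allocation-II}, to the payments of the buyers in $N(\bar{J}^{p+\epsilon})$ via Lemma~\ref{lem:boundunallocateditems}; this yields $\mathcal{R}_{\mathcal{I}}^{\mathbf{OPT}}\leq 2\mathcal{R}(\mathbf{X},\mathbf{p})$ and hence the factor $4$. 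Your charging argument for unserved buyers is precisely this item-based step, so the fix is to route the $I^{VL}$ buyers through that charging as well, rather than through budget extraction; as written, your proposal does not close the case of $Q^p$-served buyers, which is exactly where the second factor of $2$ in the paper comes from.
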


\begin{proof}
	Let us recall some intuition of Algorithm~\ref{alg:main} first. For a given critical price $p$, Algorithm~\ref{alg:main}  performs either Procedure \textsc{Compute-Allocation-I} or \textsc{Compute-Allocation-II} to allocate items to buyers. In \textsc{Compute-Allocation-I}, some items are allocated to buyers that have valuations equal to the critical price. Note that some buyers whose valuations are greater than the critical price also obtain their items at the critical price if they are connected to those items. 
	On the other hand,  in \textsc{Compute-Allocation-II},  items are only allocated to buyers that have valuations greater than the critical price. Let us use  $I^{NVL}$ to denote the set of buyers who obtain items at prices lower than their valuations, and $I^{VL}$ to denote the set of buyers who obtain items at prices that are equal to their valuations. Finally, we use $I^{UN}$ to denote the set of buyers who obtain nothing at the end of Algorithm \ref{alg:main}. 
	In the remainder of this proof we will use the following additional notation. Let us denote with $\mathcal{R}^{\mathbf{OPT}}$ the optimal revenue, let $\mathcal{R}_{I^{NVL}}^{\mathbf{OPT}}$ be the fraction of the optimal revenue obtained from the buyers in $I^{NVL}$, and let $\mathcal{R}_{\mathcal{I}}^{\mathbf{OPT}}$ be the fraction of the optimal revenue obtained from the buyers in $\mathcal{I}$, where $\mathcal{I} = I^{VL} \cup I^{UN}$.
	Obviously, $\mathcal{R}^{\mathbf{OPT}} = \mathcal{R}_{I^{NVL}}^{\mathbf{OPT}} + \mathcal{R}_{\mathcal{I}}^{\mathbf{OPT}}$.
Moreover, let $J^{VL}$ be the set of items assigned to buyers in $I^{VL}$ at the end of Algorithm~\ref{alg:main}. Similarly, let $J^{NVL}$ be the set of items assigned to buyers in $I^{NVL}$ at the end of Algorithm~\ref{alg:main}. Given a set of items $J' \subseteq J$,  $\mathcal{R}^{\mathbf{}}(J')$ is the revenue that Algorithm~\ref{alg:main} obtains from items in $J'$.\footnote{All the new notation in this proof is only for the purpose of analysis.}

	The first step is to bound the optimal revenue that can be extracted to buyers in $I^{NVL}$. Let $\mathcal{R}_{I^{NVL}}^{\mathbf{OPT}}$ be the revenue from buyers in $I^{NVL}$ by any algorithm. 
	\begin{align*}
	\mathcal{R}_{I^{NVL}}^{\mathbf{OPT}} \leq \sum_{i \in I^{NVL}} b_i \overset{(a)}{\leq} \sum_{i \in I^{NVL}} \bar{p}_i
	(D_i(\bar{p}_i)+1) \overset{(b)}{\leq} 2 \cdot \sum_{i \in I^{NVL}} \bar{p}_i
	D_i(\bar{p}_i) \leq 2 \cdot \mathcal{R}(\mathbf{X}, \mathbf{p})
	\end{align*}
	where $\bar{p}_i = \frac{p_i}{|X_i|}$ is the price per item for buyer $i$. Inequality $(a)$ is guaranteed by Lemma~ \ref{lem:vlallocation} and ~\ref{lem:nvlallocation}, which shows that, at a given price, if a buyer whose valuation is greater than the price and is assigned to some items, then the buyers must be assigned to $D_i(\bar{p})$ items.  The inequality $(b)$ holds because for each buyer $i \in I^{NVL}$ we have that $D_i(\bar{p}_i) \geq 1$, since $b_i >  v_i$.
	
	The next step is to bound the optimal revenue that can be extracted from buyers in $\mathcal{I}$.
	Different from the previous case, as these buyers do not ``exhuast" their budgets, we need a different approach to bound their revenue. In particular, we bound the optimal revenue by considering the items in their preference sets. Let $P$ be the set of prices used by Algorithm~\ref{alg:main} i.e. $P = \{p \in \mathbf{p}\}$.
	
	\begin{align*}
	\mathcal{R}_{\mathcal{I}}^{\mathbf{OPT}} 
	\overset{(a)}{\leq} &\sum_{j \in \bigcup_{i\in \mathcal{I}} S_i} \max \{v_i | i \in \mathcal{I} \text{ and } j \in S_i\}\\
	\overset{(b)}{\leq} & \mathcal{R}^{\mathbf{}}(J^{VL}) + \mathcal{R}^{\mathbf{}}(J^{NVL} \cap \bigcup_{i\in \mathcal{I}} S_i)  
	+ \sum_{j \in \bigcup_{i\in \mathcal{I}} S_i \setminus \{J^{VL} \cup J^{NVL}\}} \max \{v_i | i \in \mathcal{I} \text{ and } j \in S_i\}\\
	\overset{(c)}{\leq} & \mathcal{R}^{\mathbf{}}( J^{VL} \cup J^{NVL}\cap \bigcup_{i\in \mathcal{I}} S_i) +  \sum_{p \in P} |N(J^{p+\epsilon})| \cdot p\\
	\overset{(d)}{\leq} & \mathcal{R}^{\mathbf{}}( J^{VL} \cup J^{NVL}\cap \bigcup_{i\in \mathcal{I}} S_i) + \mathcal{R}^{\mathbf{}}(J^{NVL}) \leq
	2 \cdot \mathcal{R}(\mathbf{X}, \mathbf{p})
	\end{align*}
	
	Now let us explain the bound above step by step. Inequality (a) comes from the fact that the optimal revenue is bounded by selling each item in their preference sets to the buyer with the highest valuation. For each item, there are three possibilities in our algorithm. The first case is that the item is sold to a buyer in $\mathcal{I}$. In this case, that buyer \textit{must} be the buyer with the highest valuation among all buyers who are interested in this item in $\mathcal{I}$. It is because if there are buyers who are also interested in this item and have higher valuation, those buyers must be assigned to other items at the same price (by the argument of augmenting path). Thus, these buyers are not belong to $\mathcal{I}$. The second case is that the item is sold to a buyer in $I^{NVL}$. In this case, the price of item is at least the maximum valuation among all buyers who are interested in this item in $I^{VL}$. The last case is the item remain unsold by the end of Algorithm~\ref{alg:main}. It happens in Procedure \textsc{Compute-Allocation-II}. For these items, we keep the same bound as before. By these arguments together, we obtain Inequality (b). 
	Next, Lemma~\ref{lem:vlunallocateditems} proved that after Procedure \textsc{Compute-Allocation-I} there are not new unallocated items, and 
	by Lemma~\ref{lem:boundunallocateditems} we know that the number of buyers allocated in Procedure \textsc{Compute-Allocation-II} is greater than the number of unsold items. Moreover, for each of those buyers, their payments are at least the highest valuation among all buyers who are interested in this item in $\mathcal{I}$. Otherwise, similar as before, buyers who are also interested in this item and have higher valuation must be assigned to other items at the same price (by the argument of augmenting path). Thus, these buyers are not in $\mathcal{I}$. Now we reach  Inequality (c). Inequality (d) is straightforward since each term is part of the revenue of Algorithm~\ref{alg:main}.
	
Putting all pieces we got so far together, we have
	
	\begin{align*}
	\mathcal{R}^{\mathbf{OPT}} = \mathcal{R}_{I^{NVL}}^{\mathbf{OPT}} + \mathcal{R}_{\mathcal{I}}^{\mathbf{OPT}} 
	\leq 4 \cdot \mathcal{R}(\mathbf{X}, \mathbf{p})
	\end{align*}
	\qed
\end{proof}

\section{An ascending price algorithm for $b_i < v_i$}\label{sec:blessv}
In this section we present an algorithm that obtains a $4$-approximation to the optimal bundle-price envy-free revenue in matching markets when all  buyers have budgets less than their valuations, i.e., $\forall i \in I, b_i < v_i$.
The main difference with respect to the algorithm presented in Section \ref{sec:bgeqv} is about the adopted pricing scheme.
Due to a separation example showed in Feldman et al.~\cite{DBLP:conf/sigecom/FeldmanFLS12}, we know that the optimal bundle-price envy-free revenue cannot be approximated within $O(\frac{1}{m})$ using an item pricing scheme.
Thus the algorithm presented in this section will embody a bundle pricing scheme, and it will produce a pairwise envy-free outcome.\footnote{Precisely, following the pricing scheme definitions introduced in \cite{DBLP:conf/sigecom/FeldmanFLS12}, the adopted pricing scheme is a $(h,p)$-proportional pricing scheme. It means that we impose an upper bound on the size of the bundles that the buyers may request.}
Before the description of the algorithm, we overload some notations. Given price $p$, let 
$A^{p}$ contain the buyers whose budget are strictly greater than $p$ and having positive demands, i.e.,
$A^{p} = \{ i \in I | b_i > p \land D_i(p) > 0\}$. Let 
$Q^{p}$ contain the buyers whose budgets are equal to $p$ and having positive demands, i.e., $Q^{p} = \{ i \in I | b_i = p \land D_i(p) > 0 \}$.
Finally, let $I^{p}$ be the union of $A^{p}$ and $Q^{p}$, i.e., $I^{p} =A^{p} \cup Q^{p}$.

\begin{algorithm}
	\begin{algorithmic}[1]
		\Require $< I, J, \mathbf{S}, \mathbf{v}, \mathbf{b} >$
		\Ensure $\langle\mathbf{X},\mathbf{p} \rangle$
		\State{$p \gets 0$; /*$p$ is the price per-item and it is dynamically increasing*/ }
		\State{$G^p = (I^p \cup J, E^p)$;  /*$G^p$ is the preference-graph for unsold items at price $p$ */} 
		\While{$|\mathcal{M}(G^p)| \neq \emptyset$} 
			\State{Increase $p$ until $|\mathcal{M}(G^p)| > |\mathcal{M}(G^{p+\epsilon})|$;}
			\State{Partition $(N(\bar{J}^p) \cup Q^p) $ into sets of buyers $Y_1, \ldots, Y_k$, where for any $Y_t$ and $Y_{t'}$ there do not exist buyers $y \in Y_t$ and $y' \in Y_{t'} $ such that $S_y \cap S_{y'} \neq \emptyset$;}
			\For{each $Y_t$}
		\If{$\sum_{i \in (Y_t \cap Q^p)\setminus \hat{I} } b_i < \sum_{i \in Y_t \cap N(\bar{J}^p)   }b_i$} 
		\State{$\textsc{M} \gets $\textsc{Compute-Allocation-II}$(Y_t , J, \mathbf{S}, \mathbf{v}, \mathbf{b}, p+\epsilon)$;} 
		\For{each edge $(i,j) \in \textsc{M}$}
		\State{$\hat{I} = \hat{I} \cup \{i\}$, $J = J \setminus \{j\}$, $X_i = X_i \cup \{j\}$, $p_i = p_i + p$, $b_i = b_i - p$;}
		\EndFor
		\Else
				\If{$\exists$ a matching $\textsc{M}$ s.t. every buyer in $Y_t \setminus \hat{I}$ is matched to one item}
				\State{Compute such a matching $\textsc{M}$;}
				\Else
						\State{$\textsc{M} \gets $\textsc{Compute-Allocation-II}$(Y_t, J,\mathbf{S}, \mathbf{v}, \mathbf{b}, p+\epsilon)$;} 
				\EndIf
				\For{each edge $(i,j) \in \textsc{M}$}
				\State{$\hat{I} = \hat{I} \cup \{i\}$, $J = J \setminus \{j\}$, $X_i = X_i \cup \{j\}$, $p_i = p_i + p$, $b_i = b_i - p$;}
				\EndFor
		\EndIf
		\EndFor
		\EndWhile
	\end{algorithmic} \caption{An ascending price algorithm for matching markets when $v_i > b_i$} \label{alg:main2}
\end{algorithm}

\subsection{Detailed Description}
The algorithm, which is referred to as Algorithm~\ref{alg:main2}, shares the similar spirit as Algorithm~\ref{alg:main} but possesses some tweaks on the actions performed at critical prices. 
The algorithm starts with an initial price $p=0$ for all items and keeps increasing the price for all items until the price becomes a critical price. 
The reason of a price being a critical price is a little different from the previous case.
Since $v_i > b_i$ for all buyers, a price becomes a critical price when it is equal to the budget of some buyer, or buyers cannot afford the same amount of items at higher prices. At each critical price, the algorithm divides buyers in $(N(\bar{J}^p) \cup Q^p)$ into different partition (a partition is denoted by $Y_{t}$). One property of these partitions is that no buyers from different partitions have a common item in their preference sets. It allows us to focus on each partition separately. Then the following actions are performed on each partition.  

\begin{itemize}
	\item The algorithm compares the remaining budgets between buyers in $(Y_t \cap Q^{p} )  \setminus \hat{I}$ and buyers in $Y_t \cap N(\bar{J}^{p})$ where $\hat{I}$ is the set of buyers who have obtained items at previous critical prices and still have budgets to demand more items. Recall that $\bar{J}^{p}$ is the set of items that are not matched in $\mathcal{M}(\bar{G}^{p})$ and $N(\bar{J}^{p})$ is the set of buyers that are connected to an item in $\bar{J}^{p}$ in  $\mathcal{M}(\bar{G}^{p})$. If the sum of the budgets of buyers in $(Y_t \cap Q^{p} )  \setminus \hat{I}$ is relatively small, then the algorithm  ``ignores" them (i.e. does not allocate them any item) but allocates items to buyers in $Y_t \cap N(J^{p})$ at $p+\epsilon$ each. It can be achieved by the same as Procedure \textsc{Compute-Allocation-II}. It would extract at least half of  the budgets of buyers in $Y_t \cap N(J^{p})$, which in turn is a good approximation to the optimal revenue from all buyers in $Y_t \setminus \hat{I}$.
	\item On the other case, when the sum of the budgets of buyers in  $(Y_t \cap Q^{p} )  \setminus \hat{I}$ is relatively large, the algorithm checks if it is possible to give one item to every buyer in $Y_t \setminus \hat{I}$. If yes, the algorithm allocates one item to each of them. By doing so, the algorithm extracts all the budgets of buyers in $(Y_t \cap Q^{p}) \setminus \hat{I} $ since their budgets are equal to the price. It gives us a good approximation to the optimal revenue extracted from buyers in $Y_t$. Otherwise, the algorithm ``ignores" buyers in  $(Y_t \cap Q^{p})\setminus \hat{I}$ and allocate items to buyers in $Y_t \cap N(\bar{J}^p)$ at price $p+\epsilon$ each. We show that it does not hurt the revenue since the optimal envy-free algorithm cannot extract any revenue from those buyers either.  
\end{itemize}	

\subsection{Main result}
Our main result is the following. Due to the space limit, the proofs of envy-freeness and the revenue guarantee  of Algorithm~\ref{alg:main2} are in Appendix~\ref{sec:envy-freeA2} and~\ref{sec:approximationA2}. 

\begin{theorem}
	Algorithm~\ref{alg:main2} is pairwise envy-free and achieves a $4-$approximation to the optimal revenue in envy-free outcomes when all buyers have budgets less than their valuations.
\end{theorem}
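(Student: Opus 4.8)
The plan is to follow the two-stage template used for Algorithm~\ref{alg:main}: first prove that the computed outcome is pairwise envy-free, then bound the optimal envy-free revenue from above by the revenue of Algorithm~\ref{alg:main2}. The revenue analysis is most naturally carried out \emph{group by group}: since the sets $Y_1,\dots,Y_k$ are chosen so that buyers in different groups share no item in their preference-sets, the items ever allocated to one group are irrelevant to the buyers of every other group, and it suffices to prove a constant-factor guarantee for each $Y_t$ separately, at each critical price, and then aggregate over groups and prices.

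For pairwise envy-freeness the partition again does most of the work. If buyers $i$ and $j$ lie in different groups then $S_i\cap X_j=\emptyset$, so $u_i(X_j,p_j)=-p_j\le 0\le u_i(X_i,p_i)$ by individual rationality (and $u_i(X_j,p_j)=-\infty$ whenever $p_j>b_i$); hence no cross-group envy arises. Within a group I would argue as for \textsc{Compute-Allocation-II}: by Lemma~\ref{lem:nvlallocation} every buyer allocated through that procedure receives her full demand $D_i(p+\epsilon)$ at a common per-item price, so no other offered bundle at an equal or higher per-item price can strictly improve her utility, while an analogue of Lemma~\ref{lem:interestinprevprices} rules out any interest in bundles sold earlier at strictly lower prices (an augmenting path would otherwise have allocated her there). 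A buyer in $Q^p$ who receives a single item pays exactly $p=b_i$, so no strictly larger affordable bundle exists and her positive utility $v_i-b_i$ cannot be beaten; unallocated buyers are handled by an analogue of Lemma~\ref{lem:unallocatedbuyers}, every item of their preference-set being sold at a price they cannot profitably match.

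For the $4$-approximation I would examine the possible outcomes for a group $Y_t$ at a critical price, always measuring against the \emph{remaining} budgets. When $\sum_{i\in(Y_t\cap Q^p)\setminus\hat I}b_i<\sum_{i\in Y_t\cap N(\bar J^p)}b_i$ the algorithm runs \textsc{Compute-Allocation-II} on $Y_t\cap N(\bar J^p)$, where the \emph{half-budget} property holds: since $b_i<v_i$ forces $D_i(p+\epsilon)\ge1$, the charge $p\cdot D_i(p+\epsilon)$ satisfies $b_i<(p+\epsilon)(D_i(p+\epsilon)+1)\le 2(p+\epsilon)D_i(p+\epsilon)$, so (up to the negligible $\epsilon$-term) it extracts at least $\tfrac12\sum_{N(\bar J^p)}b_i$, exactly as in steps $(a)$--$(b)$ of the previous theorem. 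As any envy-free outcome collects at most $\sum_{Y_t\setminus\hat I}b_i<2\sum_{N(\bar J^p)}b_i$ in this branch, the group contributes a factor $4$. In the complementary branch, whenever a matching assigning one item to every buyer of $Y_t\setminus\hat I$ exists, the algorithm extracts the \emph{entire} budget $\sum_{Q^p}b_i$ from the budget-tight buyers (each pays $p=b_i$), which by $\sum_N b_i\le\sum_Q b_i$ is at least half of $\sum_{Y_t\setminus\hat I}b_i$. Summing these per-group, per-price bounds and telescoping over the critical prices yields $\mathcal{R}^{\mathbf{OPT}}\le 4\,\mathcal{R}(\mathbf{X},\mathbf{p})$, with Lemma~\ref{lem:boundunallocateditems} controlling the few items \textsc{Compute-Allocation-II} leaves unsold.

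The genuinely new and hardest step is the last sub-case: $\sum_{Q^p}b_i$ is large, yet \emph{no} matching gives one item to every buyer of $Y_t\setminus\hat I$, so the algorithm ignores the budget-tight buyers. I would apply Hall's theorem to the preference-graph of $Y_t\setminus\hat I$: infeasibility produces a set $W$ with $|N(W)|<|W|$, a genuine shortage on the items these budget-$p$ buyers contest. The crux is to show that \emph{any} envy-free outcome extracts essentially nothing from the ignored $Q^p$ buyers: were such a buyer allocated a contested item at a price she can afford (at most $b_i=p$), the shortage would leave some higher-budget buyer of $N(\bar J^p)$ --- who values the same items --- unsatisfied and envious of that bundle-price pair, contradicting envy-freeness. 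Consequently the optimum's revenue on $Y_t\setminus\hat I$ is confined to $N(\bar J^p)$, which the half-budget bound already covers. Making this exclusion argument rigorous against the full \emph{bundle-price} optimum (rather than an item-price one), together with the bookkeeping of remaining budgets across several critical prices, is where I expect the bulk of the deferred work in Appendix~\ref{sec:approximationA2}.
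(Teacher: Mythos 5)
Your proposal matches the paper's own proof in all essentials: for envy-freeness you reconstruct exactly the paper's three lemmas (Lemmas~\ref{lem:unallocatedbidders_budget}, \ref{lem:gettingsomething}, \ref{lem:maximumbundle} --- unallocated buyers priced out, augmenting-path buyers always served at the same price, and best bundle among those offered at each price), and for revenue you identify the same three-branch analysis --- half-budget extraction via \textsc{Compute-Allocation-II} when the tight budgets in $Q^p$ are small, full extraction of the tight budgets when a one-item-per-buyer matching exists, and, for the hardest case, precisely the paper's key exclusion argument (Lemma~\ref{lem:optdoesnotallocatetoo}) that the optimal envy-free outcome also extracts nothing from the skipped buyers when no such matching exists, since all their desired items are still available and envy-freeness forbids a partial allocation. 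Your per-group, per-critical-price accounting is only cosmetically different from the paper's global partition of buyers into $I_1$, $I_2$, $I_3$ (Lemmas~\ref{lem:I1}--\ref{lem:I3}); the decomposition and the bounds are the same.
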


\bibliographystyle{plain}
\bibliography{ref}

\newpage

\appendix

\section{Proof of Lemma \ref{lem:nvlallocation}}\label{appendix:nvlallocation}

\begin{proof}
	First, we prove $(i)$. Since $|\mathcal{M}(G^p)| > |\mathcal{M}(G^{p+\epsilon})|$, but $|\mathcal{M}(G^p)| = |\mathcal{M}(\bar{G}^{p})|$, then the valuations of all buyers in the graph $G^p$ are greater than $p+\epsilon$.
Thus for any  buyers $i \in N(\bar{J}^{p+\epsilon})$, it holds $v_i > p$. Next we prove $(ii)$. Assume by contradiction that a buyer $i \in N(\bar{J}^{p+\epsilon})$ gets allocated $|\{ (i,j) \in \mathcal{M}(G^{p+\epsilon}) \}|  < D_i(p+\epsilon)$ items. 
Since $i$ can afford at least one more item at $p+\epsilon$ and $i$ is connected to an item $j \in \bar{J}^{p+\epsilon}$ by an augmenting path $\pi$ (it is implied by the fact all items can be sold at price $p$), it implies that one can match one more item in $G^{p+\epsilon}$. It contradicts the fact that $\mathcal{M}(G^{p+\epsilon})$ is a maximum matching. 
\qed
\end{proof}

\section{Proof of Lemma \ref{lem:boundunallocateditems}}\label{appendix:boundunallocateditems}

\begin{proof}
Notice the maximum $B$-matching on $G^p$ matches all items in $G^p$.
Notice also that all the buyers in Algorithm \ref{alg:VLalgorithm} belong to $A^{p}$, thus every buyer decreases her demand at most by $1$.
So, since $N(\bar{J}^{p+\epsilon}) \subseteq A^{p}$ is the set of buyers that are interested in the items $\bar{J}^{p+\epsilon}$, we have that $|\bar{J}^{p+\epsilon} | \leq | N(\bar{J}^{p+\epsilon})|$.
\qed
\end{proof}

\section{Proof of Lemma \ref{lem:uniqueprice}}\label{appendix:uniqueprice}
\begin{proof}
	Assume by contradiction that buyer $i$ obtains the items in $X_i$ at different prices. We denote this set of prices by $P$.
	Let $\bar{p} = \min \{ p | p \in P\}$. Let $j \in X_i$ be an item assigned to $i$ at price $\bar{p}$.
	Note that Algorithm \ref{alg:main} assigns items to buyers in either Procedure \textsc{Compute-Allocation-I} or \textsc{Compute-Allocation-II}. We prove that if the buyer $i$ obtains an item at $\bar{p}$, then she has to obtain all the other items at $\bar{p}$ too.
	
	\begin{itemize}
		
		\item Suppose that the item $j$ is assigned in \textsc{Compute-Allocation-I}. It means that either $(a)$
		$i \in Q^p$ or $(b)$ $i \in A^p$ and $i$ is linked to some item in $\bar{J}$ by an 
		augmenting path.
		In case $(a)$, $v_i = \bar{p}$, thus buyer $i$ cannot obtain any item at price $p > \bar{p}$. 
		In case $(b)$, by Lemma \ref{lem:vlallocation}, buyer $i$ obtains $D_i(\bar{p})$ items at price $\bar{p}$ each,
		thus at any price $p > \bar{p}$ she cannot afford any item.
		
		\item Suppose that the item $j$ is assigned in \textsc{Compute-Allocation-II}.
		It means that $i\in A^P$ and is linked to some item in $\bar{J}^{p+\epsilon}$ by an augmenting path.
		By Lemma \ref{lem:nvlallocation}, buyer $i$ obtains $D_i(\bar{p})$ items at price $\bar{p}$ each. 
		Thus at any price $p > \bar{p}$, she cannot afford any item.
	\end{itemize}
	\qed
\end{proof}

\section{Envy-freeness of Algorithm~\ref{alg:main2}}
\label{sec:envy-freeA2}
In this section, we show that Algorithm~\ref{alg:main2} is pairwise envy-free. First, we show that if a buyer does not obtain any item, then every item in her preference-set is sold to another buyer at price greater than her budget. 

\begin{lemma}\label{lem:unallocatedbidders_budget}
	If buyer $i$ does not obtain any item, i.e., $X_i = \emptyset$, then all items in $S_i$ are sold at a price greater than $b_i$.
\end{lemma}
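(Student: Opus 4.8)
The plan is to prove the contrapositive item by item: I show that if some item $j \in S_i$ is sold at a price at most $b_i$, then buyer $i$ must herself receive an item, contradicting $X_i = \emptyset$. The structural fact that drives everything is the mirror image of the observation used in Lemma~\ref{lem:unallocatedbuyers}: since $b_i < v_i$, for every price $p \leq b_i$ we have $p < v_i$ and $\lfloor b_i / p \rfloor \geq 1$, so $D_i(p) \geq 1$. Hence buyer $i$ stays active ($i \in I^p$) at every price up to and including $b_i$, with $i \in A^p$ whenever $p < b_i$ and $i \in Q^p$ exactly when $p = b_i$. The budget now plays the role that the valuation played in Section~\ref{sec:bgeqv}.

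First I would handle any critical price $p < b_i$. Here $i \in A^p$ and, because $X_i = \emptyset$, buyer $i$ is left with unfilled demand. If an item $j \in S_i$ is sold at this price, then buyer $i$ is joined to $j$ by an augmenting path, so exactly as in the argument of Lemma~\ref{lem:unallocatedbuyers} the maximality of the matching together with Lemma~\ref{lem:nvlallocation} places $i$ in $N(\bar{J}^p)$ and forces her to be matched to $D_i(p) \geq 1$ items. Since buyers of $N(\bar{J}^p)$ are never discarded by Algorithm~\ref{alg:main2} --- in each branch they are served either by \textsc{Compute-Allocation-II} or by the saturating matching --- this yields $X_i \neq \emptyset$, a contradiction. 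Therefore no item of $S_i$ can be sold strictly below $b_i$.

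The remaining, and genuinely new, case is a critical price $p = b_i$, at which $i \in Q^p$ and the algorithm may deliberately ignore buyer $i$. I would argue that ignoring $i$ never costs her the inequality of the lemma. In the branch that tests for a matching saturating all of $Y_t \setminus \hat{I}$, buyer $i$ herself is matched whenever such a matching exists, again giving $X_i \neq \emptyset$. In the two branches where $i$ is ignored (when $\sum_{i' \in (Y_t \cap Q^p)\setminus\hat{I}} b_{i'} < \sum_{i' \in Y_t \cap N(\bar{J}^p)} b_{i'}$, or when no saturating matching exists), the items of $S_i$ that are sold are allocated by \textsc{Compute-Allocation-II} to buyers of $N(\bar{J}^p) \subseteq A^p$ and leave the market at price $p + \epsilon = b_i + \epsilon > b_i$. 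Hence every item of $S_i$ disposed of at this critical price is sold strictly above $b_i$, as required.

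The main obstacle is precisely this last case: controlling the ``ignore'' mechanism for $Q^p$ buyers and making sure that the items touched while a $Q^p$ buyer is discarded really leave at $p + \epsilon$ rather than $p$. Two bookkeeping points need care --- that \textsc{Compute-Allocation-II} is invoked with, and charges, the raised price $p + \epsilon$, and that the neighbourhood $N(\bar{J}^{p+\epsilon})$ used inside the procedure is reconciled with the $N(\bar{J}^p)$ appearing in the partition step. Tracking that any sale occurring while $i \in Q^p$ is ignored is charged at $p + \epsilon$ closes this gap and, combined with the $p < b_i$ case, shows that every item of $S_i$ is sold at a price greater than $b_i$.
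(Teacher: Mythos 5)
Your proposal is correct, and its engine is the same as the paper's: the augmenting-path ``contagion'' argument by which buyer $i$, having an edge to a sold item $j$, inherits the winner's augmenting path to an unmatched item and is therefore served with her full demand (Lemma~\ref{lem:nvlallocation}). Where you genuinely diverge is in the treatment of the boundary price $p = b_i$. The paper's proof is a uniform three-line contradiction with no case split: it assumes $j \in S_i$ is sold at some $p \leq b_i$, asserts the winning buyer has an augmenting path to an unmatched item of $\bar{J}^p$, and concludes $i$ ``will be allocated at $p$.'' This glosses over two things that your version makes explicit: first, when the sale happens in the saturating-matching branch, the winner need \emph{not} have an augmenting path to an unmatched item (the matching there just gives one item to each buyer of $Y_t \setminus \hat{I}$), and what actually rescues the argument is your observation that $i \in Q^p$ then belongs to the same partition class $Y_t$ and is matched herself; second, when $i \in Q^p$ is deliberately ignored, the paper's claim that $i$ would be allocated at $p$ is simply not how Algorithm~\ref{alg:main2} behaves, and the correct repair is precisely yours --- in the ignore branches items leave via \textsc{Compute-Allocation-II} at $p+\epsilon = b_i + \epsilon > b_i$, which is moreover exactly where the lemma's \emph{strict} inequality comes from. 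You are also right to flag the paper's internal inconsistency on this point: the prose description of Algorithm~\ref{alg:main2} says the ignore branch sells at $p+\epsilon$, while the pseudocode line charges $p_i = p_i + p$; your proof (like the lemma itself, whose conclusion would otherwise degrade to $\geq b_i$) requires the prose reading. In short, the paper buys brevity at the cost of eliding the branch structure; your case split at $p = b_i$, together with the reconciliation of $N(\bar{J}^p)$ with the $N(\bar{J}^{p+\epsilon})$ used inside the procedure, is the more faithful account of why the statement is true.
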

\begin{proof}
	Assume that an item $j \in S_i$ is sold at a price $p \leq b_i$.
	Let $i'$ be the buyer that obtains $j$ at $p$. Since $i'$ obtains $j$ at $p$, it implies that $i'$ has an augmenting path to an unmatched item $j' \in \bar{J}^p$ in $\mathcal{M}(G^p)$ at $p$. 
	Thus also $i$ has an augmenting path to $j'$. Hence, $i$ will be allocated to some item as $p$. It reaches a contradiction.
	\qed
\end{proof}

Second, we show that if an item $j \in S_i$ is sold at price $p$, then buyer $i$ whose budget is great than $p$ will obtain some other item(s) in $S_i$ at price-per-item $p$ as well. 

\begin{lemma}\label{lem:gettingsomething}
	If an item $j \in S_i$ is sold at $p$ and buyer $i$ has positive demand at $p$, then $i$ obtains some items(s) in $S_i$ at $p$.
\end{lemma}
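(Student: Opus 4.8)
The plan is to reuse the augmenting-path exchange argument that underlies Lemma~\ref{lem:unallocatedbidders_budget}, except that instead of deriving a contradiction I will use it to exhibit an item that buyer $i$ actually receives at price $p$. First I would dispose of the trivial case: if buyer $i$ is herself the buyer who acquires $j$ at price $p$, there is nothing to prove, so assume some other buyer $i' \neq i$ obtains $j$ at $p$.

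The core step is the exchange. Since $i'$ obtains $j$ at the critical price $p$, the allocation rule guarantees that $i'$ lies in the relevant neighborhood set, i.e. $i'$ is joined by an augmenting path $\pi$ to some unmatched item $j'$ (in the matching used by the branch that sold $j$). Because $j \in S_i$ and $i$ has positive demand at $p$, the edge $(i,j)$ is present in the preference graph, and since $(i',j)$ belongs to the matching while $i \neq i'$, the edge $(i,j)$ is unmatched. Prepending $i \to j \to i'$ to $\pi$ therefore produces an augmenting path from $i$ to the unmatched item $j'$. Consequently $i \in N(\bar{J}^{p})$, so in particular $i \in N(\bar{J}^{p}) \cup Q^{p}$ and $i$ is placed in one of the partitions $Y_{t}$; since $i$ and $i'$ both have $j$ in their preference sets, the partition property forces $i$ to lie in the very partition $Y_{t}$ that is being processed when $j$ is sold.

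It then remains to check that the processing of $Y_{t}$ actually hands an item to $i$, which I would do by a case split on the branch taken by Algorithm~\ref{alg:main2}. In the branch where one item is matched to every buyer in $Y_{t} \setminus \hat{I}$, buyer $i$ receives an item directly. In the branch that invokes \textsc{Compute-Allocation-II} on $Y_{t}$, the augmenting path constructed above shows $i \in N(\bar{J}^{p+\epsilon})$, so by the definition of the procedure (Algorithm~\ref{alg:NVLalgorithm}) buyer $i$ is included in the returned matching $\textsc{M}$ and is thus allocated an item of $S_i$. In both branches $i$ obtains at least one item of her preference set at price-per-item $p$, as claimed.

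The main obstacle I anticipate is the bookkeeping across the two distinct graphs and matchings that appear: the partition is defined through $N(\bar{J}^{p})$ (augmenting paths in $\mathcal{M}(\bar{G}^{p})$, which involves only buyers in $A^{p}$), whereas \textsc{Compute-Allocation-II} recomputes a matching on $G^{p+\epsilon}$. I would need to argue that the augmenting path witnessing $i'$'s purchase of $j$ can be transported to the graph used by the relevant branch, and to treat separately the two possibilities $b_i > p$ (so $i \in A^{p}$ and $i$ retains positive demand also at $p+\epsilon$) and $b_i = p$ (so $i \in Q^{p}$), together with the effect of the set $\hat{I}$ of buyers already served at earlier critical prices. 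These are the delicate points; the exchange argument itself is routine once the correct graph has been fixed.
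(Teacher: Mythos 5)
Your proposal is correct and follows essentially the same route as the paper: the paper's own proof is a three-line contradiction argument that prepends the unmatched edge $(i,j)$ and the matched edge $(j,i')$ to the augmenting path witnessing $i'$'s purchase of $j$, concluding that $i$ is connected to an unmatched item $j' \in \bar{J}^p$ and hence obtains items in Algorithm~\ref{alg:main2}. Your branch-by-branch case analysis and the delicate points you flag (transporting the path between $\mathcal{M}(\bar{G}^{p})$ and $\mathcal{M}(G^{p+\epsilon})$, the distinction $b_i = p$ versus $b_i > p$, and the role of $\hat{I}$) are all glossed over by the paper's proof, so you are if anything more careful than the original.
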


\begin{proof}
	Assume by contradiction that $j\in S_i$ is sold to buyer $i'$ at $p$ and buyer $i$ does not obtain anything if she has a positive demand.
	Since $i'$ gets item $j$ it means that $i$ is connected through an augmenting path to an item $j' \in \bar{J}^p$.
	As $j \in S_i$, it also implies that $i$ has an augmenting path to $j'$ and will obtain some items in $S_i$ in Algorithm~\ref{alg:main2}. This reaches a contradiction.\qed
\end{proof}

Finally, we show that when buyer $i$ obtains items in $Z_i \subseteq X_i$ at price-per-item $p$,  buyer $i$ does not envy the bundles allocated to other buyers at the same price. 

\begin{lemma} \label{lem:maximumbundle}
	If buyer $i$ obtains a bundle $Z_i$ at price $p$ then $(v_i - p) \cdot |Z_i| \geq (v_i - p) \cdot |Z_j|$ for all $j \neq i$, where $Z_j$ is the set of items obtained by buyer $j$.
\end{lemma}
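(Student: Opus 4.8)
The plan is to certify pairwise envy-freeness at a fixed per-item price $p$ by showing that the bundle $Z_i$ that buyer $i$ receives there is a best response to the uniform price $p$ among all bundles she can afford. The first step is to pin down the size of $Z_i$: buyer $i$ is handed exactly her full demand, $|Z_i| = D_i(p) = \min\{\lfloor b_i/p\rfloor, |S_i|\}$ (see \eqref{eq:demand}). In the \textsc{Compute-Allocation-II} branch this is precisely part (ii) of Lemma~\ref{lem:nvlallocation}; in the one-item matching branch it holds because every $i \in Q^p$ has $b_i = p$, forcing $D_i(p) = 1$, and such a buyer is matched to a single item. Since we are in the regime $b_i < v_i$ and the charged price satisfies $p \le b_i$, we also record that $v_i - p > 0$, so the asserted inequality $(v_i-p)\,|Z_i| \ge (v_i-p)\,|Z_j|$ is equivalent to a purely combinatorial size comparison.

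The heart of the argument is then a one-line counting bound. Fix any buyer $j \neq i$ whose bundle $Z_j$ is charged at the same per-item price $p$. If $p\,|Z_j| > b_i$ then $u_i(Z_j, p\,|Z_j|) = -\infty$ and buyer $i$ trivially does not envy $j$. Otherwise $Z_j$ is affordable for $i$, i.e. $p\,|Z_j| \le b_i$, and the number of $i$-desired items inside $Z_j$ obeys $|Z_j \cap S_i| \le \min\{|Z_j|, |S_i|\} \le \min\{\lfloor b_i/p\rfloor, |S_i|\} = D_i(p) = |Z_i|$, where the middle step uses $|Z_j| \le \lfloor b_i/p\rfloor$ from affordability. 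Consequently $u_i(Z_j, p\,|Z_j|) = v_i\,|Z_j \cap S_i| - p\,|Z_j| \le (v_i - p)\,|Z_j \cap S_i| \le (v_i-p)\,|Z_i| = u_i(Z_i, p\,|Z_i|)$, where the first inequality uses $|Z_j| \ge |Z_j \cap S_i|$ and the second uses $v_i - p > 0$ together with $|Z_i| \ge |Z_j \cap S_i|$. This is exactly the claimed inequality and shows that $i$ does not envy $j$.

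The step I expect to need the most care is the identity $|Z_i| = D_i(p)$, because inside \textsc{Compute-Allocation-II} the matching is computed at $p+\epsilon$ while the per-item price actually charged is $p$; I would treat $\epsilon$ as infinitesimal so that $D_i(p+\epsilon) = D_i(p) = \lfloor b_i/p\rfloor$ away from the measure-zero set of prices that divide some $b_i$ exactly, handling that boundary with the standard tie-breaking of the ascending auction. The only remaining issue is scoping: Lemma~\ref{lem:maximumbundle} compares $i$ solely against bundles sold at the \emph{same} per-item price $p$, so it must be combined with the companion lemmas to obtain full pairwise envy-freeness. Envy toward items sold more cheaply is excluded by Lemma~\ref{lem:gettingsomething} (had $i$ wanted such an item she would already have been served then), and envy toward more expensive bundles, as well as toward buyers who received nothing, is excluded by the $-\infty$ case above together with Lemma~\ref{lem:unallocatedbidders_budget}.
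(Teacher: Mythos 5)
Your utility-based reading of the statement and the affordability dichotomy (the $-\infty$ case versus $p\,|Z_j| \le b_i$) are sound, and in fact more careful than the paper's own proof, which is a terse two-case observation: $(a)$ buyers served via \textsc{Compute-Allocation-II} receive their full demands, or $(b)$ every buyer in $Y_t$ receives exactly one item. But your proof has a genuine gap precisely at the step you flagged as delicate, and not for the reason you anticipated: the identity $|Z_i| = D_i(p)$ is \emph{false} in the one-item branch for buyers in $N(\bar{J}^p)$. In that branch (entered when $\sum_{i \in (Y_t \cap Q^p)\setminus \hat{I}} b_i \ge \sum_{i \in Y_t \cap N(\bar{J}^p)} b_i$ and a one-item-per-buyer matching exists), \emph{every} buyer in $Y_t \setminus \hat{I}$ is matched to a single item --- including buyers $i \in N(\bar{J}^p)$ with $b_i > p$, for whom $D_i(p) = \lfloor b_i/p \rfloor$ may be $2$ or more. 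Your justification covers only the $Q^p$ buyers in that branch. For a buyer $i$ with $|Z_i| = 1 < D_i(p)$, your counting chain yields only $|Z_j \cap S_i| \le D_i(p)$, which no longer implies the needed bound $|Z_j \cap S_i| \le |Z_i|$, so the final inequality $u_i(Z_j, p\,|Z_j|) \le (v_i - p)\,|Z_i|$ is not established.

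The repair requires an ingredient you never invoke: the structure of the partition $Y_1, \ldots, Y_k$. For $j$ in the \emph{same} partition $Y_t$ as such a buyer $i$, the one-item branch gives $|Z_j| = 1 = |Z_i|$, so $|Z_j \cap S_i| \le |Z_j| = |Z_i|$; for $j$ in a \emph{different} partition $Y_{t'}$, the defining property that no buyers across partitions share preferred items gives $Z_j \subseteq \bigcup_{y' \in Y_{t'}} S_{y'}$ disjoint from $S_i$, hence $|Z_j \cap S_i| = 0$ and again no envy. (This cross-partition case also shows the lemma's literal inequality $(v_i-p)|Z_i| \ge (v_i-p)|Z_j|$ can fail --- a buyer in another partition may receive a strictly larger bundle at the same price via \textsc{Compute-Allocation-II} --- so the utility reinterpretation you adopt is in fact the only tenable one; the paper's claim that ``buyer $i$ gets the best allocation among all buyers'' glosses over this as well.) With that case split added, your argument goes through and subsumes the paper's; without it, the proof fails for exactly the high-budget buyers swept into the one-item branch.
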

\begin{proof}
	Buyer $i$ obtains items in two situations: $(a)$ when all buyers in $Y_{t} \cap N(\bar{J}^P)$ obtain the number of items that are equal to their demands, or $(b)$ when all buyers in 
	$Y_t$ obtain exactly $1$ item. It is clear that in both situations buyer $i$ gets the best allocation among all buyers. 
	\qed
\end{proof}

Now we prove that Algorithm~\ref{alg:main2} is pairwise envy-free. By Lemma \ref{lem:unallocatedbidders_budget}, we know that the buyers who do not obtain any item in Algorithm~\ref{alg:main2} will not envy anyone since all items in their preference sets are sold at prices higher than their budgets. By Lemma \ref{lem:gettingsomething}, if an item $j \in S_i$ is sold at price $p$ and $D_i(p)>0$ then buyer $i$ gets at least one item too. Moreover, by Lemma \ref{lem:maximumbundle} we know that $i$ obtains the bundles that maximizes her utility among all the bundles allocated at $p$. This concludes that the outcome in Algorithm~\ref{alg:main2} is pairwise envy-free.

\section{Revenue guarantee of Algorithm~\ref{alg:main2}}
\label{sec:approximationA2}
In this section, we analyze the approximation ratio of Algorithm~\ref{alg:main2}. The main result is the following.

\begin{theorem}\label{thm:approximationA2}
	Algorithm \ref{alg:main2} obtains a $4$-approximation with respect to the optimal revenue,
	i.e. $\mathcal{R}(\mathbf{X}, \mathbf{p}) \geq \frac{\mathcal{R}^{\mathbf{OPT}}}{4}$.
\end{theorem}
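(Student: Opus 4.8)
The plan is to mirror the two-part revenue analysis already carried out for Algorithm~\ref{alg:main}, but adapted to the bundle-pricing regime where $b_i < v_i$. I would split the optimal revenue $\mathcal{R}^{\mathbf{OPT}}$ into the contribution from buyers whom the algorithm ``ignores'' (those in some $(Y_t \cap Q^p)\setminus\hat{I}$ whose collective budget is small, and those who never get anything) versus the buyers that actually receive items. The key structural facts to invoke are the partitioning step: since the sets $Y_1,\ldots,Y_k$ share no items across partitions, the optimal allocation restricted to the items of $\bigcup_{i\in Y_t}S_i$ can be analyzed partition-by-partition, and the total optimal revenue is the sum over partitions. Within each partition I would bound the optimal revenue extractable from its buyers against the revenue the algorithm collects there, then sum.

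First I would handle the ``ignore'' branch, where $\sum_{i\in(Y_t\cap Q^p)\setminus\hat{I}}b_i < \sum_{i\in Y_t\cap N(\bar{J}^p)}b_i$ (or where no perfect matching of $Y_t\setminus\hat{I}$ exists). Here the algorithm runs \textsc{Compute-Allocation-II} and charges $p$ (essentially $p+\epsilon$) per item to the buyers in $Y_t\cap N(\bar{J}^p)$; by Lemma~\ref{lem:nvlallocation} each such buyer receives exactly $D_i(p+\epsilon)$ items, so the algorithm extracts at least half of each such buyer's budget (the $\bar{p}_i(D_i(\bar{p}_i)+1)\le 2\bar{p}_i D_i(\bar{p}_i)$ argument). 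Combined with the branch condition, the revenue collected dominates, up to the factor $2$, the total budget $\sum_{i\in Y_t\setminus\hat I}b_i$, which upper-bounds whatever the optimum could extract from those buyers. Second I would handle the ``give one item to everyone'' branch, where the algorithm matches every buyer in $Y_t\setminus\hat{I}$ to a single item at price $p$. Since every buyer in $(Y_t\cap Q^p)\setminus\hat I$ has budget exactly $p$, this extracts their \emph{full} budget, and the contribution from the $N(\bar J^p)$ buyers is again handled by the budget-extraction bound; so this branch is even cheaper to charge against the optimum.

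The main obstacle I expect is the bookkeeping around $\hat{I}$ and the interaction across critical prices: a buyer may receive items at several consecutive critical prices, so I must argue that the per-price budget-extraction bounds telescope correctly into a global statement, and that the optimum's per-item valuation charge (analogous to Inequality~(a) in the previous proof, bounding $\mathcal{R}^{\mathbf{OPT}}$ by selling each item to its highest-valuation interested buyer) is consistently apportioned across partitions and price levels without double counting. In particular I must use envy-freeness (via the augmenting-path argument behind Lemma~\ref{lem:gettingsomething}) to certify that whenever the algorithm ignores a low-budget group, no higher-valuation buyer interested in those items was actually denied, so the optimum genuinely cannot do better on that sub-market. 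Assembling the two branch bounds across all partitions and all critical prices, each contributing at most a factor $2$, yields $\mathcal{R}^{\mathbf{OPT}}\le 4\cdot\mathcal{R}(\mathbf{X},\mathbf{p})$, which is the claimed bound.
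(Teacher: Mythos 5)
Your high-level plan is the same as the paper's (Theorem~\ref{thm:approximationA2}): classify buyers by which branch of Algorithm~\ref{alg:main2} handled them at their critical price, extract at least half the budget of every allocated buyer via the $(\lfloor b_i/p_i\rfloor+1)p_i\le 2\lfloor b_i/p_i\rfloor p_i$ bound, use full budget extraction from the $Q^p$ buyers in the one-item branch, and sum to a factor $4$; this is exactly the paper's decomposition into $I_1$, $I_2$, $I_3$ and Lemmata~\ref{lem:I1}--\ref{lem:I3}. However, there is a genuine gap: you merge into a single ``ignore'' branch two regimes with \emph{opposite} budget inequalities. When no matching giving one item to each buyer of $Y_t\setminus\hat{I}$ exists (the fallback inside the else-branch, the paper's $I_3$), the governing condition is $\sum_{i\in(Y_t\cap Q^p)\setminus\hat{I}}b_i \ge \sum_{i\in Y_t\cap N(\bar{J}^p)}b_i$, i.e.\ the ignored buyers hold the \emph{larger} budgets. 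Your central claim for the merged branch --- that the revenue collected dominates, up to a factor $2$, the total budget $\sum_{i\in Y_t\setminus\hat{I}}b_i$ --- fails there: the unallocated $Q^p$ buyers' budgets can be arbitrarily larger than anything the algorithm collects from $Y_t$, so the budget upper bound on the optimum cannot be charged against the algorithm's revenue in this case.

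The paper closes this case with a different idea that your proposal does not actually contain: Lemma~\ref{lem:optdoesnotallocatetoo} shows that the buyers left empty in the failed-matching branch receive nothing \emph{in any envy-free optimum}. The argument is that, by Lemma~\ref{lem:gettingsomething}, such buyers were not interested in any previously sold item, so their whole submarket of items is still intact at price $p$; since even then one item per buyer is infeasible, pairwise envy-freeness forces any solution to extract zero from the group, giving $\mathcal{R}_{I_3}^{\mathbf{OPT}}\le 2\mathcal{R}_{I_3}(\mathbf{X},\mathbf{p})$ from the allocated buyers alone. Your closing remark that envy-freeness should certify ``the optimum genuinely cannot do better'' gestures at this, but you attach it to the \emph{low-budget} ignore case (where it is unnecessary --- pure budget accounting suffices) and phrase it as ``no higher-valuation buyer was denied,'' which imports the valuation-driven reasoning of Section~\ref{sec:bgeqv}; in the present regime $b_i<v_i$ valuations never bind, and what must be proved is the all-or-nothing consequence of pairwise envy-freeness. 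One further accounting point you leave open resolves as in Lemma~\ref{lem:I1}: the ignored low-budget groups must be charged against the \emph{global} revenue $2\mathcal{R}(\mathbf{X},\mathbf{p})$ (the dominating $N(\bar{J}^p)$ budgets may belong to $\hat{I}$ buyers whose payments are counted elsewhere), and it is precisely this double charge, not a per-partition telescoping of factor-$2$ bounds, that produces the final constant $4$.
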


\begin{proof}
	Given $I' \subseteq I$, let $\mathcal{R}_{I'}^{\mathbf{OPT}}$ denote the revenue extract from buyers in $I'$ in the optimal outcome.
	Similarly, $\mathcal{R}_{I'}(\mathbf{X},\mathbf{p})$ denote the revenue extracted by our algorithm in allocation $(\mathbf{X},\mathbf{p})$ from buyers in $I'$.
	Now, we partition $I$ in three subsets of buyers according their allocations in Algorithm~\ref{alg:main2}:
	
	\begin{itemize}
		
		\item $I_1$: 
		for each critical price $p$ and each set of buyers $Y_{t}$ such that
		$\sum_{i \in (Y_{t} \cap Q^p) \setminus \hat{I}} b_i < \sum_{i \in Y_{t} \cap N(\bar{J}^p)} b_i$, all buyers in
		$Y_{t} \cap (Q^p \cup N(\bar{J}^p) )\setminus \hat{I}$ are added to $I_1$.
		
		\item $I_2$: 
		for each critical price $p$ and each set of buyers $Y_{t}$ such that
		$\sum_{i \in (Y_{t} \cap Q^p) \setminus \hat{I}} b_i \geq \sum_{i \in Y_{t} \cap N(\bar{J}^p)} b_i$, all sets of buyers
		$Y_{t} \cap (Q^p \cup N(\bar{J}^p)) \setminus \hat{I}$ to whom is possible to allocate $1$ item each are added to $I_2$.
		
		\item $I_3$: 
		for each critical price $p$ and each set of buyers $Y_{t}$ such that
		$\sum_{i \in (Y_{t} \cap Q^p) \setminus \hat{I}} b_i \geq \sum_{i \in Y_{t} \cap N(\bar{J}^p)} b_i$, all sets of buyers
		$Y_{t} \cap (Q^p \cup N(\bar{J}^p)) \setminus \hat{I}$ to whom is not possible to allocate $1$ item each are added to $I_3$.
		
	\end{itemize}
	
	 Notice that each time we add a set of buyers in $I_{1}$, $I_{2}$, or $I_{3}$, we do not consider buyers in $\hat{I}$.
	 Thus to these sets are added only the buyers in $(Q^p \cup N(\bar{J}^p) )$ that do not allocate anything before.
	
	So, $I = I_1 \cup I_2 \cup I_3$ and $I_i \cap I_j = \emptyset$ for $i,j \in \{1,2,3\}, i \neq j$.
	
	Now we need to prove three auxiliary lemmata.
	
	\begin{lemma}\label{lem:I1}
		$\mathcal{R}_{I_{1}}^{\mathbf{OPT}} \leq 2 \mathcal{R}_{I_{1}}(\mathbf{X},\mathbf{p})+ 2 \mathcal{R}(\mathbf{X},\mathbf{p})$
	\end{lemma}
	
	\begin{proof}
		
		Denote with $I^{>0}_1 = \{i \in I_1 | X_i \neq \emptyset \}$ and $I^{=0}_1 = \{i \in I_1 | X_i = \emptyset \}$.
		\begin{align*}
		\mathcal{R}_{I_{1}}^{\mathbf{OPT}} \leq \sum_{i \in I_1} b_i =  \sum_{i \in I_1^{>0}} b_i + \sum_{i \in I_1^{=0}} b_i 
		\leq 2 \mathcal{R}_{I_{1}}(\mathbf{X},\mathbf{p})+ 2 \mathcal{R}(\mathbf{X},\mathbf{p})
		\end{align*}
		
		where the last inequality is because to each buyer $i \in I^{>0}_1$, Algorithm \ref{alg:main2} sells $\lfloor \frac{b_i}{p}\rfloor$ items, where $p_i$ is the price
		in which $i$ is inserted in $I_1$. Thus
		\begin{align*}
		\sum_{i \in I_1^{>0}} b_i
		\leq  \sum_{i \in I_1^{>0}} \bigg(\bigg\lfloor \frac{b_i}{p_i}\bigg\rfloor + 1\bigg) \cdot p_i
		\leq  \sum_{i \in I_1^{>0}} 2 \bigg\lfloor \frac{b_i}{p_i}\bigg\rfloor \cdot p_i = 2 \mathcal{R}_{I_{1}}(\mathbf{X},\mathbf{p})
		\end{align*}
		since $\lfloor \frac{b_i}{p}\rfloor \geq 1$ for all $i \in I^{>0}_1$.
		And $\sum_{i \in I_1^{=0}} b_i \leq 2 \mathcal{R}(\mathbf{X},\mathbf{p})$ since for all buyers $I^{=0}_1$ exist a set of buyers $I' \subseteq I$ such that
		$\sum_{i \in I^{=0}_1} b_i \leq \sum_{i \in I'} b_i$.
		\qed
	\end{proof}
	
	\begin{lemma}\label{lem:I2}
		$\mathcal{R}_{I_{2}}^{\mathbf{OPT}} \leq 2 \mathcal{R}_{I_{2}}(\mathbf{X},\mathbf{p})$
	\end{lemma}
	
	\begin{proof}
		
		Denote with $I^{=1}_2$ buyers that were budget-limited when added to $I_2$ and let $I^{>1}_2$ be the set of buyers that were not budget-limited when added to $I_2$.
		
		\begin{align*}
		\mathcal{R}_{I_{2}}^{\mathbf{OPT}} \leq \sum_{i \in I_2} b_i = \sum_{i \in I^{=1}_2} b_i + \sum_{i \in I^{>1}_2} b_i \leq 2\sum_{i \in I^{=1}_2} b_i =
		2 \mathcal{R}_{I_{2}}(\mathbf{X},\mathbf{p})
		\end{align*}
		
		The second inequality is because $\sum_{i \in I_2^{=1}} b_i > \sum_{i \in I_2^{>1}} b_i$ by definition of $I_2$.\qed
		
	\end{proof}
	
	\begin{lemma}\label{lem:I3}
		$\mathcal{R}_{I_{1}}^{\mathbf{OPT}} \leq 2 \mathcal{R}_{I_{3}}(\mathbf{X},\mathbf{p})$
	\end{lemma}
	
	\begin{proof}
		
		Denote with $I^{>0}_3 = \{i \in I_3 | X_i \neq \emptyset \}$ and $I^{=0}_3 = \{i \in I_3 | X_i = \emptyset \}$.
		First start with the following lemma.
		
		\begin{lemma}\label{lem:optdoesnotallocatetoo}
			
			All buyers in $I^{=0}_3$ do not receive items also in the optimal envy-free solution.
			
		\end{lemma}
		
		\begin{proof}
			
			Let $\mathcal{P}$ be the set of prices used for buyers in $I_3$.
			Thus, we can partition the set $I^{=0}_3 = \bigcup_{p \in \mathcal{P}} I^{=0}_{3,p}$.
			
			Now, for all each $p \in \mathcal{P}$ we may have many set of buyers $Y_t$.
			But, for all sets $Y_t$ and all prices $p \in \mathcal{P}$ it is true that there is no matching such that all buyers $Y_t$ can obtain one item.
			Notice that each buyer $i \in Y_t$ does not allocate any items at any price $p' < p$. Thus, each buyer $i \in Y_t$ is also
			not interested in any items sold before, otherwise by Lemma \ref{lem:gettingsomething} $i$ has to obtain something too.
			
			It means that all the items desired by buyers $Y_t$ are available at $p$. Thus, if it is not possible to match exactly one item to each buyer
			$i \in Y_t$, then also the optimal envy-free solution. does not allocate any items to buyers in $Y_t$.
			Since, this is true for each $p \in \mathcal{P}$, the lemma is proved.\qed
			
		\end{proof}
		
		Thus,
		
		\begin{align*}
		\mathcal{R}_{I_{3}}^{\mathbf{OPT}} \leq \sum_{i \in I_3^{>0}} b_i \leq  2 \mathcal{R}_{I_{3}}(\mathbf{X},\mathbf{p})
		\end{align*}
		
		where the first inequality is by Lemma \ref{lem:optdoesnotallocatetoo}. Last inequality is because to all $i \in I^{>0}_3$, whatever price $p$ is, the algorithm sells to them $\lfloor \frac{b_i}{p}\rfloor$ items. Thus the revenue extracted
		from them is 
		\begin{align*}
		\sum_{i \in I_3^{>0}} b_i
		\leq  \sum_{i \in I_3^{>0}} \bigg(\bigg\lfloor \frac{b_i}{p_i}\bigg\rfloor + 1\bigg) \cdot p_i
		\leq  \sum_{i \in I_3^{>0}} 2 \bigg\lfloor \frac{b_i}{p_i}\bigg\rfloor \cdot p_i \leq  2 \mathcal{R}_{I_{3}}(\mathbf{X},\mathbf{p})
		\end{align*}
		since $\lfloor \frac{b_i}{p}\rfloor \geq 1$ for all $i \in I^{>0}_3$.\qed
		
	\end{proof}
	
	Finally, by Lemma \ref{lem:I1}, Lemma \ref{lem:I2}, and Lemma \ref{lem:I3}:
	
	\begin{align*}
	\mathcal{R}^{\mathbf{OPT}}  &= \mathcal{R}_{I_{1}}^{\mathbf{OPT}} +\mathcal{R}_{I_{2}}^{\mathbf{OPT}} +\mathcal{R}_{I_{3}}^{\mathbf{OPT}} \\
	&2 \mathcal{R}_{I_{1}}(\mathbf{X},\mathbf{p})+2 \mathcal{R}(\mathbf{X},\mathbf{p})+2 \mathcal{R}_{I_{2}}(\mathbf{X},\mathbf{p})
	+2 \mathcal{R}_{I_{3}}(\mathbf{X},\mathbf{p}) \leq 4 \mathcal{R}(\mathbf{X},\mathbf{p})
	\end{align*}\qed
\end{proof}

\section{Limits of fixed-price auctions}\label{sec:lowerbound}
The first attempt to design revenue-maximizing envy-free algorithms is to consider the fixed-price scheme which assigns a uniform price for all items. The fixed-price scheme is practical  and is commonly used in the design of revenue-maximizing envy-free algorithms (see~\cite{DBLP:conf/wine/Colini-BaldeschiLSZ14,DBLP:conf/sigecom/FeldmanFLS12}). Our first result is that revenue obtained from the fixed-price scheme cannot approximate the optimal revenue within a factor of $O(\log n)$.  

\begin{theorem}\label{thm:lowerbound}
The optimal revenue of envy-free outcome in matching markets cannot be approximated within $O(\log n)$ by any fixed-price scheme.
\end{theorem}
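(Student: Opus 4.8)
The plan is to exhibit one matching-market instance, parametrized by $n$, on which the optimal envy-free revenue exceeds the revenue of \emph{every} fixed-price scheme by a factor $\Omega(\log n)$. I would use a ``harmonic'' instance: take $m=n$ items $\{1,\dots,n\}$ and $n$ buyers, where buyer $i$ is interested only in item $i$ (so $S_i=\{i\}$ and the preference-graph is a perfect matching), has valuation $v_i = 1/i$, and has a budget $b_i$ large enough to afford a single item (for instance $b_i=1$, or $b_i=v_i$ if one wishes to keep the instance inside the regime $b_i\ge v_i$ of Section~\ref{sec:bgeqv}). Since each buyer is single-item, her demand at a price $p$ is $D_i(p)=1$ when $p\le v_i$ and $D_i(p)=0$ otherwise.

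First I would lower-bound the benchmark. Consider the item-pricing that sets the price of item $i$ to $\rho_i=v_i=1/i$. Each buyer $i$ then receives her own item at utility $v_i-\rho_i=0$, while any other item has value zero and hence negative utility to her, so no buyer prefers a different bundle and the outcome is envy-free (a fortiori a feasible bundle-price envy-free outcome). Its revenue is $\sum_{i=1}^n 1/i = H_n \ge \ln n$, so $\mathcal{R}^{\mathbf{OPT}}\ge H_n$.

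Next I would upper-bound the revenue of an arbitrary fixed-price scheme. Fix a uniform price $p>0$. Because each buyer buys her single item exactly when $p\le v_i$, the set of purchasers is $\{i : 1/i\ge p\}=\{i : i\le 1/p\}$, of size $\lfloor 1/p\rfloor$, and each pays exactly $p$; hence the fixed-price revenue is $p\cdot\lfloor 1/p\rfloor \le p\cdot(1/p)=1$, independently of $p$. This is the key ``rectangle under the harmonic staircase'' estimate: no single price can enclose more than unit area, whereas $\sum_i 1/i$ has area $\Theta(\log n)$. Combining the two bounds yields $\mathcal{R}^{\mathbf{OPT}}/\mathcal{R}^{\mathrm{fix}} \ge H_n/1 = \Omega(\log n)$, which proves the theorem; note that the same instance shows fixed pricing fails even in the favorable regime $b_i\ge v_i$.

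There is no genuinely hard step here; the only point requiring care is the fixed-price upper bound, where one must confirm that no allocation at a uniform price can outperform selling each affordable item to its unique interested buyer. This is immediate in the present construction precisely because the preference-sets are singletons, so the number of items sold at price $p$ equals the number of buyers with $v_i\ge p$ and the revenue factorizes as $p\cdot|\{i:v_i\ge p\}|$. I would close by remarking that on this family the bound is essentially tight, since the best fixed price still extracts revenue $1=\Theta(\mathcal{R}^{\mathbf{OPT}}/\log n)$.
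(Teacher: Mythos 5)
Your proof is correct and is essentially the paper's own argument: the paper uses the same disjoint singleton-preference instance with the rescaled harmonic valuations $v_i=b_i=\frac{n}{n-i+1}$ (so $\mathcal{R}^{\mathbf{OPT}}=nH_n$ while any fixed price extracts exactly $n$), which is your construction up to reindexing and multiplication by $n$. The key step is identical in both -- since preference-sets are singletons, revenue at a uniform price $p$ factorizes as $p\cdot|\{i: v_i\ge p\}|$, the ``rectangle under the harmonic staircase'' bound.
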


\begin{proof}
	Consider an instance with an equal numbers of buyers and items. More specifically, let $I = \{1,\ldots, n\}$ and $J = \{j_1,\ldots, j_n\}$. 
	For each buyer $i \in I$, let $S_i = \{j_i\}$, and $v_i  = b_i = \frac{n}{n-i+1}$. As there is no intersection between the preference sets 
	of buyers, the optimal envy-free outcome allocates $S_i$ to buyer $i$ and charges buyer $i$ at price $\frac{n}{n-i+1}$.  Thus, $\mathcal{R}^{OPT} = \sum_{i=1}^n \frac{n}{n-i+1}$. Note that any fixed-price scheme decides an uniform price for all items. As the valuation and budgets are monotonic among all buyers, it implies that, when price $p$ is greater than $v_i$, only buyers with $k>i$ are allocated the items in their preference sets.   It is easy to  verify that the optimal revenue in the fixed-price scheme is $n$ by setting the uniform price equal to an arbitrary $v_i$. Hence,
	\[
	\frac{\mathcal{R}}{\mathcal{R}^{\mathbf{OPT}}} = \frac{v_{i} \cdot (n - i + 1)}{\sum_{i = 1}^n v_{i}} = \frac{n}{\sum_{i = 1}^n \frac{n}{n - (i- 1)}} \leq \frac{n}{n \cdot \log n} = \frac{1}{\log n}
	\]\qed
\end{proof}

\end{document}